\newtheorem{definition}{Definition}
\newtheorem{theorem}{Theorem}
\newtheorem{corollary}[theorem]{Corollary}
\newtheorem{lemma}[theorem]{Lemma}
\newtheorem{claim}[theorem]{Claim}
\newcommand{\eps}{\varepsilon}
\newcommand{\rr}{\mathbb{R}}
\newcommand{\cutnorm}[1]{\left\| #1 \right\|_{\square}}
\newcommand{\norm}[2]{\left\| #2 \right\|_{#1}}
\newcommand{\ip}[3]{\left\langle #2, #3 \right\rangle_{#1}}
\newcommand{\onevec}{\mathds{1}}
\newcommand{\frobnorm}[1]{\left\| #1 \right\|_{F}}
\DeclareMathOperator{\vol}{vol}
\DeclareMathOperator{\tower}{tower}
\DeclareMathOperator{\poly}{poly}
\DeclareMathOperator{\tr}{Tr}
\DeclareMathOperator{\spann}{span}
\newcommand{\Szemeredi}{Szemer\'{e}di}
\newcommand{\Lovasz}{Lov{\' a}sz}
\title{A Unified View of Graph Regularity via Matrix Decompositions\footnote{Supported in part by NSF awards CCF-1717349,  DMS-183932 and CCF-1909756.  bodwin@umich.edu, vempala@cc.gatech.edu.}}
\author[1]{Greg Bodwin}
\author[2]{Santosh Vempala}
\affil[1]{University of Michigan}
\affil[2]{Georgia Tech}
\date{}
\begin{document}

\maketitle

\thispagestyle{empty}

\begin{abstract}
We give a unified proof of sparse algorithmic weak and \Szemeredi{} regularity lemmas for several well-studied graph classes where only sparse weak regularity lemmas were previously known.
These include core-dense graphs, low threshold rank graphs, and (a version of) $L^p$ upper regular graphs.
More precisely, we define \emph{cut pseudorandom graphs}, we prove our sparse regularity lemmas for these graphs, and then we show that cut pseudorandomness captures all of the above graph classes as special cases.

The core of our approach is an abstracted matrix decomposition, which can be computed using a simple algorithm by Charikar [AAC0 '00].
Using work of Oveis Gharan and Trevisan [TOC '15], it also implies new PTASes for MAX-CUT, MAX-BISECTION, MIN-BISECTION for a significantly expanded class of input graphs.
(It is NP Hard to get PTASes for these graphs in general.)
\end{abstract}

\pagebreak

\setcounter{page}{1}

%\small
\begin{center}
\begin{figure}[h!]
\begin{tikzpicture} [scale=0.8]

\draw [gray!10, fill=gray!10] (-8, 6) rectangle (2.5, -3.5);
\node at (-2.5, 5.25) {\bf Graph Classes};
\node at (7.5, 5.25) {\bf Applications};

\draw [thick, <-] (-5, 2.25) -- (-5, 1.75);

\draw [thick, ->] (-3, -1) -- (-2, 0);
\draw [thick, ->] (-3, -1) -- (-2, -2);
\draw [thick, ->] (-3, 3) -- (-2, 4);
\draw [thick, ->] (-3, -1) -- (-2, 4);

\draw [thick, ->] (0, 3.25) -- (0, 2.75);

\draw [thick, ->] (2, 4) -- (3, 2);
\draw [thick, ->] (2, 2) -- (3, 0);
\draw [thick, ->] (2, 0) -- (3, 0);
\draw [thick, ->] (2, -2) -- (3, 0);

\draw [thick, ->] (7, 0) -- (8, 0);
\draw [thick, ->] (7, 0) -- (8, 2);

%\draw [thick, ->, dotted] (7, 2) -- (8, 0) node [midway, rotate=-65, above ] {\small bad bnds};

\draw (-5, -1) ellipse (2 and 0.75) node {Dense};
\draw (-5, 1) ellipse (2 and 0.75) node {$(n, d, \lambda)$ Graphs};
\draw (-5, 3) ellipse (2 and 0.75) node {Jumbled};

\draw [align=center] (0, 4) ellipse (2 and 0.75) node {$L_{\infty}$ Upper Reg};
\draw [align=center] (0, 2) ellipse (2 and 0.75) node {$L_{p}$ Upper Reg};
\draw (0, 0) ellipse (2 and 0.75) node {Core Dense};
\draw [align=center] (0, -2) ellipse (2 and 0.75) node {Low Thresh Rank};

\draw (5, 2) ellipse (2 and 0.75) node {Szem.\ Regularity};
\draw (5, 0) ellipse (2 and 0.75) node {Cut Approx};

\draw (10, 2) ellipse (2 and 0.75) node {Weak Regularity};
\draw (10, 0) ellipse (2 and 0.75) node {Approx Algs};

\end{tikzpicture}

\hrulefill

~

\begin{tikzpicture} [scale=0.8]
\draw [gray!10, fill=gray!10] (-8, 6) rectangle (7.5, -3.5);
\node at (0, 5.25) {\bf Graph Classes};
\node at (10, 5.25) {\bf Applications};

\draw [thick, <-] (-5, 2.25) -- (-5, 1.75);

\draw [thick, ->] (-3, -1) -- (-2, 0);
\draw [thick, ->] (-3, -1) -- (-2, -2);
\draw [thick, ->] (-3, 3) -- (-2, 4);
\draw [thick, ->] (-3, -1) -- (-2, 4);

\draw [thick, ->] (0, 3.25) -- (0, 2.75);

\draw [ultra thick, ->, blue] (2, 2) -- (3, 1);
\draw [ultra thick, ->, blue] (2, 0) -- (3, 1);
\draw [ultra thick, ->, blue] (2, -2) -- (3, 1);

\draw [blue, ultra thick, ->] (7, 1) -- (8, 0);
\draw [thick, <-] (10, 1.25) -- (10, 0.75);
\draw [ultra thick, ->, blue] (7, 1) -- (8, 4);
%\draw [thick, ->, dotted] (12, 0) to[bend right=30] (12, 4);

\draw [thick, ->] (10, -0.75) -- (10, -1.25);

\draw (-5, -1) ellipse (2 and 0.75) node {Dense};
\draw (-5, 1) ellipse (2 and 0.75) node {$(n, d, \lambda)$ Graphs};
\draw (-5, 3) ellipse (2 and 0.75) node {Jumbled};

\draw [align=center] (0, 4) ellipse (2 and 0.75) node {$L_{\infty}$ Upper Reg};
\draw [align=center] (0, 2) ellipse (2 and 0.75) node {$L_{p}$ Upper Reg};
\draw (0, 0) ellipse (2 and 0.75) node {Core Dense};
\draw [align=center] (0, -2) ellipse (2 and 0.75) node {Low Thresh Rank};

\draw [ultra thick, blue] (5, 1) ellipse (2 and 0.75) node {\color{blue} Cut Pseudorandom};

\draw (10, 4) ellipse (2 and 0.75) node {Szem.\ Regularity};
\draw (10, 2) ellipse (2 and 0.75) node {Weak Regularity};
\draw (10, 0) ellipse (2 and 0.75) node {Cut Approx};
\draw (10, -2) ellipse (2 and 0.75) node {Approx Algs};

\end{tikzpicture}

\caption{\label{fig:results} Some existing implications in the area of graph regularity (top), and how the results in this paper change the picture (bottom).  For clarity we have omitted arrows in these diagrams that are implied transitively by a chain of other arrows, even when these direct implications may have advantages (e.g.\ better quantitative bounds or faster computation).  Citations and explanations are given in the text below.}
\end{figure}
\end{center}
%\normalsize

\clearpage

%greg todo: talk about algorithms?
%greg todo: claim credit for szemreg --> 3-graph girth?

\section{Introduction}

In \emph{graph regularity}, the goal is roughly to partition the nodes of an input graph into a few parts so that the graph looks like a random bipartite graph between most pairs of parts.
The area was pioneered by a classic lemma of \Szemeredi{} \cite{Szemeredi75}, which guarantees the existence of a good partition for any sufficiently dense input graph.
This \Szemeredi{} regularity lemma has many deep algorithmic and combinatorial applications \cite{KSSS00}.
It has also spawned a line of work on regularity lemmas themselves; notable examples include the weak regularity lemma of Frieze and Kannan \cite{FK99}, the strong regularity lemma of Alon, Fischer, Krivelevich and Szegedy \cite{AFKS01}, etc.
These lie at the center of a deep mathematical theory of graph limits \cite{Alon2009,Lovasz12}.

A limitation of these regularity lemmas is that they only comment on sufficiently \emph{dense} input graphs; for example, if a graph has $n$ nodes and $\le n^{1.99}$ edges, then none of these regularity lemmas promise a partition with nontrivial properties.
There are constructions demonstrating that this is unavoidable, and there can be no perfectly analogous regularity lemmas that extend to all sparse graphs \cite{Gowers97, FL14}.
However, it may still be possible to prove regularity lemmas for some graph classes of interest.
This is the basis for \emph{sparse graph regularity}, a research area that investigates which useful classes of sparse graphs admit nontrivial regularity lemmas.
Some notable successes in the area include various sparse regularity lemmas and applications for \emph{jumbled graphs} and \emph{$(n, d, \lambda)$ graphs} (see survey \cite{KS06}), \emph{upper regular graphs} \cite{Kohayakawa97}, \emph{$L_p$ upper regular graphs} \cite{BCCZ14, BCCZ14b}, graphs of \emph{low threshold rank} \cite{GharanT15}, and \emph{core-dense graphs} \cite{KV09}.

The objective of the current paper is to unify and strengthen the linear algebraic foundations of sparse regularity.
We abstract a theorem of Lovasz{} and Szegedy \cite{LS07} into a \emph{matrix decomposition}, and we use this decomposition to define a new class of graphs that we call \emph{cut pseudorandom}.
We then prove:
\begin{itemize}
\item Essentially all classes of graphs mentioned above are cut pseudorandom, and
\item Cut pseudorandom graphs have sparse weak and \Szemeredi{}-type regularity lemmas.  For many of the above graph classes, the sparse \Szemeredi{} regularity lemmas were not previously known.
\item Additionally, there is a simple polynomial-time algorithm that \emph{computes} decompositions satisfying these sparse regularity lemmas for cut pseudorandom graphs.
\end{itemize}

In the following, we overview the matrix decomposition and its consequences more formally.

%Despite some conceptual similarities in the proofs of sparse regularity lemmas, they are unfortunately obtained by significantly different proofs.
%Partially addressing this, the seminal work of \Lovasz{} and Szegedy \cite{LS07} developed a highly flexible method that implies a one-shot unified proof of sparse weak regularity for many of these graph classes.
%The current paper suggests some changes to the method of \Lovasz{} and Szegedy, in order to make the following two improvements:
%\begin{itemize}
%\item The method suggested by \Lovasz{} and Szegedy is existential but not algorithmic; that is, there is no associated polynomial time algorithm that computes vertex partitions satisfying their weak regularity lemma.
%We show that the method can be made algorithmic by optimizing a slightly different quantity in the partitioning process, which can then be executed algorithmically by a simple LP-rounding algorithm of Charikar \cite{Charikar00}.
%
%\item We show that our method gives sparse \Szemeredi{} regularity lemmas for the above graph classes, even where only weak regularity was previously known.
%\end{itemize}
%
%As an application, by work of Oveis Gharan and Trevisan \cite{GharanT15}, input graphs enjoying sparse weak regularity lemmas also enjoy PTASes for several problems such as MAX-CUT, MIN-BISECTION, and MAX-BISECTION, for which it is NP-Hard to obtain a PTAS in general.
%Hence we obtain these PTASes for a significantly expanded class of input graphs.
%In the following, we outline our method and results in more detail.

\subsection{Cut Decompositions \label{sec:cutdecomp}}

In the following a \emph{cut vector} is a vector with entries in $\{0, 1\}$, a \emph{cut matrix} is the scaled outer product of two cut vectors (i.e., it is a matrix that is constant on some $S \times T$ block and zero elsewhere), and the \emph{cut norm} $\cutnorm{A}$ of a matrix $A$ is $\max_{v,w} |v^T A w|,$ where the max is taken over cut vectors $v,w$.
We write $\onevec$ for the cut vector where all entries are ones.
In their seminal work on weak algorithmic graph regularity, Frieze and Kannan proved:

\begin{theorem} [Cut Approximation \cite{FK99}] \label{thm:introcut}
For any binary matrix $A \in \rr^{n \times n}$ and $\eps > 0$, there is $\widehat{A}$ that is a sum of $O(\eps^{-2})$ cut matrices such that
$$\cutnorm{A - \widehat{A}} \le \eps n^2.$$
\end{theorem}

The proof of Theorem \ref{thm:introcut} uses a greedy method: in each round we ``select'' a new pair of cut vectors $(v, w)$, and we define $\widehat{A}$ as the projection of $A$ onto the span of the matrices $wv^T$ from pairs chosen so far.
Ideally, one would select cut vectors $v,w$ in each round that achieve the cut norm on $A - \widehat{A}$, but from a computational perspective this task is NP-Hard.
So instead, the selection process typically uses \emph{approximation} algorithms to find $v,w$; when $A$ is dense an extremely fast one was given by Frieze and Kannan \cite{FK99}, and for sparse graphs it is common in the current literature to use a nice approximation algorithm for the cut norm by Alon and Naor \cite{AN04}, based on semidefinite programming.

Later, \Lovasz{} and Szegedy \cite{LS07} proved a highly generalized version of Theorem \ref{thm:introcut}.
They used a slightly different selection rule from \cite{FK99}: in each round they select the cut vectors $(v, w)$ maximizing the mass (measured in Frobenius norm) removed from $A - \widehat{A}$ as the result of their choice.
This method is extremely flexible, and it implies a generalization of Theorem \ref{thm:introcut} to essentially any inner product space (c.f.\ \cite{LS07}, Lemma 4.1).
But the drawback of this method is that it is not algorithmic in the context relevant to graph regularity: it is not clear how to compute (or even approximate) in polynomial time the cut vector pair that removes the most mass from $A - \widehat{A}$.

In this paper, we propose yet another selection rule: in each round, we select the nonzero cut vectors $v,w$ maximizing
\begin{align}
\left| \frac{v^T \left(A - \widehat{A}\right) w}{\|v\|_2 \|w\|_2} \right|. \label{eq:max}
\end{align}
We prove that this approach also leads to an approximation as in Theorem \ref{thm:introcut}.
Additionally, an advantage over the previous two methods is that the maximizing cut vectors $v, w$ can be computed exactly in polynomial time using a simple LP-rounding algorithm by Charikar \cite{Charikar00}.
Our method also generalizes to other inner product spaces in the same way as \cite{LS07}, and in Appendix \ref{app:cutpvdalg} we show that our method remains algorithmic when a diagonal inner product is used.

An important conceptual point is that, since we have a deterministic selection rule, we may treat this process as an \emph{explicit matrix decomposition}.
We define the \emph{$i^{th}$ projection value} $\phi_i$ of the decomposition to be the mass removed from $A$ in the $i^{th}$ round (as measured by Frobenius norm in the ``main'' decomposition, or a different norm when a different inner product is used -- see Section \ref{sec:ddecomp} for details).
There is an analogy here to the singular values, which can be viewed as the mass removed in each round of a similar greedy process where $v, w$ are not constrained to be cut vectors.
The analogy is not perfect, linear algebraically speaking: for example there can be $n^2$ nonzero projection values but there only $n$ singular values, and unlike singular values, the projection values are not necessarily decreasing.
But, as we discuss next, the analogy is useful in the context of regularity lemmas.
Using this, we revisit a recent line of work that proves regularity lemmas by analysis of their singular values, and we show that simpler and stronger proofs can be obtained when projection values are used instead.

\subsection{Regularity for Cut Pseudorandom Graphs}

First, we give a formal background on graph regularity.
In this exposition, let $G$ be an $n$-node undirected unweighted graph with adjacency matrix $A_G$.
The \emph{weak regularity lemma} of Frieze and Kannan states:
\begin{theorem} [Weak Regularity Lemma \cite{FK99}] \label{thm:introweak}
For any $\eps > 0$, there is a vertex partition $\Pi$ of $G$ into $|\Pi| \le 2^{\eps^{-2}}$ parts and a weighted graph $H$ satisfying:
\begin{itemize}
\item The adjacency matrix $A_H$ of $H$ is constant on its $V_1 \times V_2$ block for all $V_1, V_2 \in \Pi$, and
\item $\cutnorm{A_G - A_H} = O(\eps n^2)$.
\end{itemize}
\end{theorem}

It is common to merge the blocks of $\Pi$ into supernodes, and treat $H$ as a node- and edge-weighted graph on a small number of nodes that approximates the cuts of $G$.
In parallel, the \Szemeredi{} regularity lemma controls a more stringent norm, at the price of a worse dependence on $\eps$ in partition size.
For a matrix $A \in \rr^{n \times n}$ and a partition $\Pi$ of $[n]$, let us define the \emph{partition norm} as the quantity
$$\norm{\Pi}{A} := \sum \limits_{V_1, V_2 \in \Pi} \norm{\square}{A[V_1 \times V_2]}$$
where $A[V_1 \times V_2]$ denotes the matrix that agrees with $A$ on its $V_1 \times V_2$ block and is $0$ elsewhere.

\begin{theorem} [\Szemeredi{} Regularity Lemma \cite{Szemeredi75, LS07}] \label{thm:introszem}
For any $\eps > 0$, there is a vertex partition $\Pi$ into $|\Pi| \le \tower(\eps^{-2})$ parts\footnote{Recall that $\tower(x) = 2^{2^{2^{\dots}}}$ with total height $x$.} and a weighted graph $H$ satisfying:
\begin{itemize}
\item The adjacency matrix $A_H$ of $H$ is constant on its $V_1 \times V_2$ block for all $V_1, V_2 \in \Pi$, and
\item $\norm{\Pi}{A_G - A_H} = O(\eps n^2)$.
\end{itemize}
\end{theorem}

We have that $\norm{\square}{A} \le \norm{\Pi}{A}$, and so it is generally stronger to control the partition norm, and this is sometimes required in applications \cite{ACHKRS10, KSSS00}.
For an unweighted adjacency matrix both norms are bounded by $n^2$, so these regularity lemmas improve on the trivial control by a factor of $\eps$.
The norm controls of $\eps n^2$ in Theorems \ref{thm:introweak} and \ref{thm:introszem} apply only when $G = (V, E)$ has unweighted nodes and edges, and they are nontrivial only when $|E| \gg \eps n^2$.
It is natural to ask whether analogous theorems can be proved for sparser graphs.
Unfortunately, there are barriers to achieving analogs that work for \emph{all} sparse input graphs \cite{Gowers97, FL14, MS16}.\footnote{A general sparse regularity lemma is possible if one allows a large ``exceptional set'' in which many edges can be hidden \cite{Scott11, MS19}.  \Szemeredi{}'s original regularity lemma allowed an exceptional set; for dense unweighted input graphs, this is equivalent to the phrasing by partition norm.  For sparser input graphs, the phrasing by partition norm is stronger and often preferable \cite{LS07}, and our exposition in this paper refers to the natural sparse generalization by partition norm.}
However, sparse regularity lemmas can be proved for certain classes of input graphs that satisfy various ``pseudorandomness'' criteria.
We propose a new such graph class here, with the goal of simultaneously generalizing pseudorandomness concepts that have been studied previously in the literature.

In the following, let $G = (V, E, w, \vol)$ be a graph with edge weights ($w$) and node weights ($\vol$).
Let $w^*(u, v) := w(u, v) \vol(u) \vol(v)$, let $w^*(E)$ be the sum of $w^*(e)$ over all edges $e \in E$, and let $A^*_G$ be the \emph{fully-weighted adjacency matrix}: that is, the $(i, j)^{th}$ entry of $A^*_G := w^*(v_i, v_j)$, or $0$ if $(v_i, v_j)$ is a non-edge.
The \emph{cut decomposition} is the greedy matrix decomposition outlined in the previous section, and $\phi^{(r)}$ is the vector that holds the projection values from the first $r$ rounds of this process.
We define:

\begin{definition} [Cut Pseudorandomness] \label{def:pr}
The graph $G$ is \emph{$r$ cut pseudorandom} with respect to a diagonal positive definite matrix $D$ if, when we take the cut decomposition of $A^*_G$ with respect to inner product $\langle x, y \rangle := x^T D y$, the projection values satisfy
$$\left\| \phi^{(r)} \right\|_2 = O\left(\frac{w^*(E)}{\ip{}{\onevec}{\onevec}}\right).$$
\end{definition}

Note that the definition of cut pseudorandomness depends not just on the graph $G$, but also on our selection rule and hence the inner product in play -- a graph that is $r$ cut pseudorandom with respect to one matrix $D$ might not be so under another matrix $D'$.
We then prove the following sparse regularity lemmas for cut pseudorandom graphs:

\begin{theorem} [Regularity Lemmas for Cut Pseudorandom Graphs]
If $G$ is $r = \Theta(\eps^{-2})$ cut pseudorandom, then it has a sparse weak regularity lemma, i.e., Theorem \ref{thm:introweak} holds with $w^*(E)$ in place of $n^2$.
If $G$ is $r = \tower(\eps^{-2})$ cut pseudorandom, then it has a sparse \Szemeredi{} regularity lemma, i.e., Theorem \ref{thm:introszem} holds with $w^*(E)$ in place of $n^2$.
(See Theorems \ref{thm:weakpowreg} and \ref{thm:szempowreg} for self-contained statements.)
\end{theorem}

Moreover, we show that sparse weak and \Szemeredi{} decompositions for cut pseudorandom graphs can be computed in polynomial time.
The point of cut pseudorandomness is not so much that the definition is clearly interpretable, but rather, that it simultaneously captures several previously-studied graph classes that do have clear interpretations.
We discuss these next.

\subsection{Graph Classes Captured by Cut Pseudorandomness}

First, we discuss the following version of \emph{$L^p$ upper regularity}:

\begin{definition} [$L^p$ Upper Regularity \cite{BCCZ14, BCCZ14b}]
The graph $G = (V, E)$ is \emph{$L^p$ upper $\eta$ regular} if, for any vertex partition $\Pi$ of size $|\Pi| \le \eta^{-1}$, we have
$$\left( \sum \limits_{V_1, V_2 \in \Pi} \left(\frac{\vol(V_1)(V_2)}{\vol(V)^2}\right) \left(\frac{w^*(V_1, V_2)}{\vol(V_1)\vol(V_2)}\right)^p\right)^{1/p} \le O\left( \frac{w^*(E)}{\vol(V)^2} \right)$$
where $w^*(V_1, V_2)$ is the sum of $w^*(e)$ over all edges going between $V_1$ and $V_2$.
\end{definition}

The original definition of $L^p$ upper regular considers a somewhat narrower class of partitions; namely, those that satisfy $\vol(V_i) \ge \eta \vol(V)$ for all $V_i \in \Pi$.
We discuss the significance of this change in Section \ref{sec:lp}.
$L^p$ upper regular graphs are important mainly because, in \cite{BCCZ14, BCCZ14b}, the authors extend a theory of graph limits from the case $L^{\infty}$ to general $p$; their interpretation is further discussed in these papers.
The case $p = \infty$ was first explored by Kohayakawa \cite{Kohayakawa97} and R{\" o}dl \cite{Rodl73}, and the version with node weights was considered in \cite{ACHKRS10}.
These papers develop sparse \Szemeredi{} regularity lemmas for $L^{\infty}$ upper regular graphs, although their lemmas do not readily extend to $L^p$ upper regular graphs when $p < \infty$.
$L^{\infty}$ upper regular graphs include the notable special case of \emph{jumbled graphs} \cite{Thomason87, Thomason87b, CGW89, KS06}, which in turn include \emph{$(n, d, \lambda)$ graphs} with appropriate parameters; these cases are studied in particular because they enjoy some expanded applications over the general case \cite{GS05, CFZ12}.

%The definition is perhaps most intuitive when we consider the case $p = \infty$, where we amend the above ``weighted $L_p$ norm'' of densities between parts to a max in the usual way:
%\begin{definition} [$L_{\infty}$ Upper Regularity]
%An $n$-node graph $G = (V, E)$ is \emph{$(\eta, C)$ upper $L_{\infty}$ regular} if every subsets of vertices $V_i, V_j$ of size $|V_i|, |V_j| \ge \eta n$ satisfy 
%$$\frac{e(V_i, V_j)}{|V_i||V_j|} \le C \frac{|E|}{n^2}.$$
%\end{definition}
%
%We will omit the parameter $C$ in our discussion to mean that it is some absolute constant.
%Kohayakawa \cite{Kohayakawa97} and R\"{o}dl \cite{Rodl73}, and later by Coja-Oghlan et al.~\cite{CCF10}, under the name \emph{boundedness}.
%Sparse \Szemeredi{}-type regularity lemmas were proved for $L_{\infty}$ upper regularity:
%\begin{theorem} [\cite{Kohayakawa97, Rodl73}]
%For any $n$-node graph $G$ and $\eps > 0$, there is $\eta = \tower(\eps^{-2})$ such that any $\eta$ upper $L_{\infty}$ regular graph has a sparse \Szemeredi{} regularity lemma.
%\end{theorem}
%
%[todo backhere]
%
%

Alongside this theory, there are also sparse regularity lemmas for different spectrally-motivated classes of input graphs.
These include the following.
\begin{definition} [Low Threshold Rank \cite{BRS11, GS11, GS12, GS13, GharanT15}]
For a graph $G$, let $A$ be its adjacency matrix, $D$ its diagonal matrix of node degrees, and $\overline{A} = D^{-1/2} A D^{-1/2}$ its normalized adjacency matrix.
For $\eps > 0$, the \emph{$\eps$ threshold rank} of $G$ is
$$t_{\eps}(G) := \sum \limits_{\lambda \text{ an eigenvalue of } \overline{A}, \lambda > \eps} \lambda^2.$$
We say that $G$ has \emph{low $\eps$-threshold rank} if $t_{\eps}(G) = O(1)$.
\end{definition}

We also discuss the ``weighted'' version of this definition in Section \ref{sec:ltr}.
Graphs of low threshold rank gained prominence due to a sequence of papers achieving approximation algorithms for these graphs, where comparable approximation algorithms are known to be hard in the general case (e.g., \cite{BRS11, GS11, GS12, GS13}).
Oveis Gharan and Trevisan \cite{GharanT15} gave a nice explanation for these by supplying a weak regularity lemma for low threshold rank graphs, and using it for further improved approximation algorithms.
A related property was considered by Kannan and Vempala \cite{KV09}:

\begin{definition} [Core Density \cite{KV09}]
The core strength of a graph $G$ with adjacency matrix $A$ is the quantity
$$\sum \limits_{i, j} \frac{A_{ij}}{\left(\deg(i) + \overline{d}\right)\left(\deg(j) + \overline{d}\right)}$$
where $\overline{d}$ is the average node degree.
We say that $G$ is \emph{core dense} if its core strength is $O(1)$.
\end{definition}
This also has a natural weighted extension, discussed in Section \ref{sec:ltr}.
The main advantage of core density is that it can additionally be extended to tensors; we refer to \cite{KV09} for details.
They developed a weak regularity lemma for these graphs, and similarly used it towards approximation algorithms for core dense graphs.

Currently, all of the above graph classes have related but somewhat different proofs of weak regularity lemmas.
Our point is that all of these graph classes can be viewed as special cases of cut pseudorandomness:

\begin{theorem} [Cut Pseudorandomness of Graph Classes] ~
\begin{itemize}
\item If a graph is $L^p$ upper $\eta$-regular for any $p \ge 2$ (with the change in definition mentioned above), then it is $O(\log(1/\eta))$-cut-pseudorandom.

\item If a graph has low $\eps$ threshold rank, then it is $O(1/\eps^2)$-cut-pseudorandom.

\item If a graph is core dense, then it is $r$-cut-pseudorandom for every $r$.
\end{itemize}
\end{theorem}

Hence we get sparse weak \emph{and \Szemeredi{}} regularity lemmas for these graphs in one shot; except in the case of $L^{\infty}$ upper regularity, the sparse \Szemeredi{} regularity lemmas are new.
We remark that \cite{BCCZ14, BCCZ14b} establish weak regularity lemmas for $L^p$ upper regular graphs for all $p \ge 1$, so we leave a small gap here, since our theorem begins at $p\ge 2$.
References are given in Table \ref{tbl:sparsereg}.

\renewcommand{\arraystretch}{1.2}
\begin{table}
\begin{center}
\begin{tabular}{lcc}
\toprule
Sparse Graph Class & Sparse \Szemeredi{} Regularity? & Sparse Weak Regularity? \\
\midrule
$L_{\infty}$ Upper Regular & \checkmark{} \cite{Kohayakawa97, Rodl73} & \checkmark{} \cite{Kohayakawa97, Rodl73}  \\
$L_{p}$ Upper Regular & \checkmark{}\checkmark{} & \checkmark{} \cite{BCCZ14, BCCZ14b}  \\
%\thead{Few $4$- \\ Cycles} & & & \checkmark{} \cite{CFSZ20}\\
%\hline
Low Threshold Rank & \checkmark{}\checkmark{} & \checkmark{} \cite{GharanT15} \\
Core Dense & \checkmark{}\checkmark{} & \checkmark{} \cite{KV09} \\
\bottomrule
\end{tabular}
\end{center}
\caption{\label{tbl:sparsereg} Sparse regularity lemmas for various graph classes in prior work are indicated with single checkmarks.  We give a unified proof of all entries in this table, thus recovering all checkmarks and proving the double checkmarks as new results.}
\end{table}

\subsection{Algorithmic Applications}

Regularity lemmas are commonly used in approximation algorithms of NP-hard problems that can be related in some way to graph cuts.
For example, it is NP-Hard to find a MAX-CUT PTAS for general graphs \cite{KKMO04}, but the weak regularity lemma of Frieze and Kannan \cite{FK99} was initially used for an extremely efficient MAX-CUT PTAS for dense unweighted input graphs.
A natural direction is then to determine what other natural classes of graphs admit a PTAS for MAX-CUT or similar problems.
Oveis Gharan and Trevisan \cite{GharanT15} proved PTASes for MAX-CUT, MAX-BISECTION, and MIN-BISECTION for graphs of low threshold rank, following prior work in \cite{BRS11, GS11, GS12, GS13}.
In fact, they essentially prove more generally that any graph class admitting suitable algorithmic regularity lemmas have PTASes for these problems.
Using their work, we get the following algorithmic corollaries of our decomposition:

\begin{corollary} \label{cor:matrixalgs}
For any $\eps > 0$ and any $n$-node unweighted graph $G = (V, E)$ that is $O(\eps^{-2})$ cut pseudorandom (with respect to any matrix $D$), one can solve any of the following problems in $2^{\widetilde{O}(1/\eps^3)} + \poly(n)$ time:
\begin{itemize}
\item MAX-CUT within $\pm \eps |E|$ error,
\item MAX-BISECTION within $\pm \eps |E|$ error, and
\item MIN-BISECTION within $\pm \eps |E|$ error.
\end{itemize}
\end{corollary}

For example, this implies that $L_p$ upper regular graphs (and others) have PTASes for all these problems, whereas none was previously known.
These results are all discussed in more detail in Section \ref{sec:algorithms}.

%\section{Background}
%
%We start with some background terminology.
%For a matrix $A \in \rr^{m \times n}$, the \emph{Frobenius norm} is
%$$\frobnorm{A} := \left(\sum \limits_{i=1}^m \sum \limits_{j=1}^n A_{ij}^2\right)^{1/2}$$
%or, equivalently, it is $\|v\|_2$ where $v = (\sigma_1, \dots, \sigma_m)$ is the vector of singular values of $A$.
%The associated \emph{Frobenius inner product} is
%$$\frobprod{A}{B} := \sum \limits_{i=1}^m \sum \limits_{j=1}^n A_{ij} B_{ij} = \tr(AB^T).$$
%
%\section{Aside: Proof of PVD using Lovasz-Szegedy}
%
%Let $D$ be a diagonal matrix with positive entries on the diagonal, let $\ip{D}{x}{y} := x^T D y$ be an inner product over $\rr^n$, and let $\norm{D}{\cdot}$ be the corresponding norm.
%Let
%$$\ip{F, D}{X}{Y} := \tr\left( D^{-1/2} X D^{-1} Y^T D^{-1/2} \right)$$
%be a matrix inner product over $\rr^{n \times n}$, and let $\norm{F,D}{\cdot}$ be the corresponding matrix norm.
%Note that these are the standard Frobenius norm/inner product when $D=I$.
%Given a matrix $A$ and a domain $P \subseteq \rr^n \times \rr^n$, we perform the following process:
%\begin{itemize}
%In each round, we select the pair $(x, y) \in 
%\end{itemize}

\section{Cut Decompositions}

The \emph{Frobenius inner product} of two matrices $\ip{F}{\cdot}{\cdot}$ is defined by
$$\ip{F}{A}{B} := \tr(AB^T),$$
which may be equivalently viewed as the Euclidean inner product of $A, B$ as vectors.
The \emph{Frobenius norm} $\norm{F}{A}$ is the corresponding matrix norm, which may also be viewed either as the $L_2$-norm of the entries or of the singular values of $A$.
The definitions of cut vectors, cut matrices, cut norm, partition norm, and the $\onevec$ vector will continue to be used in the following technical content; see Section \ref{sec:cutdecomp} for a reminder of these definitions.

\subsection{Basic Cut Decompositions}

As a warmup, we will first present the ``basic'' cut decomposition.
The next part presents a generalization that is needed for a few applications later in the paper.

Let $A \in \rr^{n \times n}$, and let $S = \emptyset$ be an initially empty subset of cut matrices (recall that a cut matrix is a scaled outer product of two cut vectors).
At all times, the matrix $\widehat{A}$ is defined as the projection, by $\ip{F}{\cdot}{\cdot}$, of $A$ onto $\spann(S)$.
Since $S$ is initially empty, $\widehat{A}$ is initially $0$.

\paragraph{Decomposition Algorithm.}
Repeat the following process until $\widehat{A} = A$.
Let
$$v^*, w^* := \arg \max \limits_{v, w \ne 0 \in X} \left| \frac{v^T}{\|v\|_2} \left(A - \widehat{A}\right) \frac{w}{\|w\|_2} \right|,$$
breaking ties arbitrarily, and add the cut matrix $w^* v^{*T}$ to $S$.
Note that $v^*, w^*$ can be computed by a simple polynomial time algorithm of Charikar \cite{Charikar00}; see Appendix \ref{app:cutpvdalg} for more details.
We repeat the process at most $n^2$ times, since $\rr^{n \times n}$ is an $n^2$ dimensional space.

\paragraph{Definitions and Properties.}
Let $\widehat{A}_i$ be the value of $\widehat{A}$ after the $i^{th}$ round of this process, and for $i \ge 1$ let $A_i := \widehat{A}_{i} - \widehat{A}_{i-1}$.
%Thus we may write the decomposition
%$$A = \widehat{A}_{n^2} - \widehat{A}_0 = \sum \limits_{i=1}^{n^2} \left(\widehat{A}_{i} - \widehat{A}_{i-1}\right) = \sum \limits_{i=1}^{n^2} A_i.$$
The \emph{$i^{th}$ cut projection value} is the quantity $\phi_i := \frobnorm{A_i}$.
Throughout, we use the notation $\phi^{(j)}$ to denote the vector containing the first $j$ projection values.
Finally, we note that by construction the matrices $\{A_i\}$ are pairwise orthogonal (under $\ip{F}{\cdot}{\cdot}$), and that each $\widehat{A}_i$ is a linear combination of at most $i$ cut matrices.

\subsection{Cut Decompositions under Diagonal Inner Products \label{sec:ddecomp}}

Let $D \in \rr^{n \times n}$ be a diagonal positive definite matrix.
The cut decomposition of a matrix $A$ \emph{with respect to $D$} is defined as follows.
The basic cut decomposition presented above is the same as the cut decomposition of $A$ with respect to the identity matrix $I$.
First, the inner product $\ip{D}{\cdot}{\cdot}$ is defined by
$$\ip{D}{A}{B} := \ip{F}{D^{1/2} AD^{1/2}}{D^{1/2}BD^{1/2}},$$
and $\norm{D}{\cdot}$ is the corresponding matrix norm.
Again, $S$ is an initially-empty set of cut matrices.
At all times, $\widehat{A}$ is the projection of the matrix $D^{-1}AD^{-1}$ onto the span of the matrices in $S$, and now the projection is by the inner product $\ip{D}{\cdot}{\cdot}$.
In each round, we select the cut matrix $M$ satisfying
\begin{align}
M := \arg \max \limits_{M^* \text{ cut matrix}} \left| \ip{D}{\frac{M^*}{\norm{D}{M^*}}}{D^{-1}AD^{-1} - \widehat{A}} \right|, \label{eq:vwopt}
\end{align}
and we add the cut matrix $M$ to $S$.
Roughly as before, we define:
\begin{itemize}
\item $\widehat{A}_i$ is the value of $\widehat{A}$ after $i$ rounds (so $\widehat{A}_0 = 0$),
\item $A_i := \widehat{A}_i - \widehat{A}_{i-1}$, and
\item the $i^{th}$ projection value $\phi_i$ is $\norm{D}{A_i}$.
\end{itemize}

The first thing to point out is that the algorithm of Charikar \cite{Charikar00} no longer directly finds the cut matrix $M$ maximizing (\ref{eq:vwopt}) in polynomial time, essentially due to the renormalization of $M^*$ by $\norm{D}{\cdot}$.
Thus, in Appendix \ref{app:cutpvdalg} we show that Charikar's algorithm can be straightforwardly extended to solve (\ref{eq:vwopt}), so long as the matrix $D$ holds integer entries on its diagonal between $1$ and $\poly(n)$ (which it does for all cases of interest in this paper).

\subsection{Projection Values vs.\ Matrix Norms}

Here we prove some inequalities that relate the projection values of the decomposition to various matrix norms of interest.
In the following lemmas let $A, D$ be matrices where $D$ is diagonal positive definite, and suppose we decompose $A$ with respect to $D$ giving projection values $\{\phi_i\}$.
\begin{lemma} \label{lem:cutbound}
For all indices $0 \le i < n^2$, we have
$\left(\onevec^T D \onevec\right) \phi_{i+1} \ge \norm{\square}{A - D\widehat{A}_i D}.$
\end{lemma}
\begin{proof}
We have
\begin{align*}
\phi_{i+1} = \norm{D}{A_{i+1}} &= \norm{F}{D^{1/2}\left(\widehat{A}_{i+1} - \widehat{A}_i\right)D^{1/2}}\\
&\ge \norm{2}{D^{1/2}\left(\widehat{A}_{i+1} - \widehat{A}_i\right)D^{1/2}}\\
&\ge \max \limits_{v, w \in X} \left|\frac{v^T D^{1/2}}{\norm{2}{D^{1/2} v}} D^{1/2} \left( \widehat{A}_{i+1} - \widehat{A}_i \right) D^{1/2} \frac{D^{1/2} w}{\norm{2}{D^{1/2} w}}\right|\\
&= \max \limits_{v, w \in X} \left| \ip{D}{\frac{wv^T}{\norm{D}{wv^T}}}{\widehat{A}_{i+1} - \widehat{A}_i}\right|.
\end{align*}
Recall that we have defined $\widehat{A}_{i+1}$ as the projection of $D^{-1}AD^{-1}$, by $\ip{D}{\cdot}{\cdot}$, onto the span of a set of cut matrices, including $wv^T$ where $v,w$ maximize the above expression.
Thus we may continue
\begin{align*}
&= \max \limits_{v, w \in X} \left| \ip{D}{\frac{wv^T}{\norm{D}{wv^T}}}{D^{-1} A D^{-1} - \widehat{A}_i}\right|\\
&= \max \limits_{v, w \in X} \left|\frac{v^T D^{1/2}}{\norm{2}{D^{1/2} v}} D^{1/2} \left( D^{-1} A D^{-1} - \widehat{A}_i \right) D^{1/2} \frac{D^{1/2} w}{\norm{2}{D^{1/2} w}}\right|\\
&= \max \limits_{v, w \in X} \left|\frac{v^T \left(A - D \widehat{A}_i D\right) w}{\sqrt{v^T D v} \sqrt{w^T D w}} \right|\\
&\ge \frac{\cutnorm{A - D\widehat{A}_i D}}{\onevec^T D \onevec}. \tag*{\qedhere}
\end{align*}
\end{proof}

\begin{lemma} \label{lem:partbound}
For any indices $i \le j$ and any partition $\Pi$ of $[n]$, we have
$$\norm{\Pi}{ D\left(\widehat{A}_j - \widehat{A}_i\right) D} \le \left(\onevec^T D \onevec\right) \norm{2}{\phi(i::j)}.$$
\end{lemma}
\begin{proof}
First, we notice that $\norm{\Pi}{ D\left( \widehat{A}_j - \widehat{A}_i\right)D}$ is maximized when $\Pi$ is the completely refined partition that puts each index into its own singleton subset.
So without loss of generality we may prove the inequality in this setting.
In the following, absolute value symbols on matrices $|M|$ are applied entrywise, and we will use the fact that they commute with diagonal matrices; for example, $D|M| = |DM|$.
\begin{align*}
\norm{\Pi}{ D\left(\widehat{A}_j - \widehat{A}_i\right) D} &= \left(\sum \limits_{a,b=1}^n \left| \left(D\left(\widehat{A}_j - \widehat{A}_i\right) D\right)_{a,b} \right|\right)\\
&= \ip{F}{\onevec \onevec^T}{\left|D\left(\widehat{A}_j - \widehat{A}_i\right) D\right|}\\
&= \ip{F}{D^{1/2} \onevec \onevec^T D^{1/2}}{ \left|D^{1/2} \left(\widehat{A}_j - \widehat{A}_i\right) D^{1/2}\right| } \tag*{$D$ diagonal}\\
&\le \norm{F}{D^{1/2} \onevec \onevec^T D^{1/2}}\norm{F}{D^{1/2}
\left( \widehat{A}_j - \widehat{A}_i\right) D^{1/2}} \tag*{Cauchy-Schwarz}\\
&=  \left( \onevec^T D \onevec \right) \norm{D}{\widehat{A}_j - \widehat{A}_i}\\
&=  \left( \onevec^T D \onevec \right) \norm{D}{\sum \limits_{k=i+1}^j A_k}.
\end{align*}
Next, since the matrices $\{A_k\}$ are pairwise orthogonal under $\ip{D}{\cdot}{\cdot}$, we have
\begin{align*}
\norm{D}{\sum \limits_{k=i+1}^j A_k}^2 = \sum \limits_{k=i+1}^j \norm{D}{A_k}^2 = \norm{2}{\phi(i::j)}^2
\end{align*}
which combined with the above, gives
\begin{align*}
\norm{\Pi}{D\left(\widehat{A}_j - \widehat{A}_i\right)D} \le \left( \onevec^T D \onevec \right) \norm{2}{\phi(i::j)}. \tag*{\qedhere}
\end{align*}
\end{proof}

\begin{lemma} \label{lem:dinvbound}
For any index $r$, we have
$$\norm{2}{\phi^{(r)}} \le \norm{2}{\Lambda^{(r)}},$$
where $\Lambda$ is the vector containing the singular values of the matrix $D^{-1/2} A D^{-1/2}$ in non-increasing order.
\end{lemma}
\begin{proof}
First, we have
\begin{align*}
\norm{2}{\phi^{(r)}}^2 &= \sum \limits_{i=1}^r \norm{D}{A_i}^2\\
&= \norm{D}{\sum \limits_{i=1}^r A_i}^2 \tag*{$\{A_i\}$ orthogonal}\\
&= \norm{D}{\widehat{A}_r}^2\\
&= \norm{F}{D^{1/2} \widehat{A}_r D^{1/2}}^2.
\end{align*}
Recall that $\widehat{A}_r$ is the projection of $D^{-1}AD^{-1}$, by $\ip{D}{\cdot}{\cdot}$, onto the span of $r$ cut matrices, which all have rank one.
Equivalently, we can say that $D^{1/2} \widehat{A}_r D^{1/2}$ is the projection of the matrix $D^{-1/2} A D^{-1/2}$, by standard Frobenius inner product $\ip{F}{\cdot}{\cdot}$, onto the span of $r$ matrices of rank one.
Let $M^*$ be the matrix maximizing $\norm{F}{M^*}$ over the matrices that are projections of $D^{-1/2} A D^{-1/2}$ onto the span of $r$ rank one matrices; thus, we have
$$\norm{F}{M^*} \ge \norm{F}{D^{-1/2} \widehat{A}_r D^{-1/2}}.$$
Since $M^*$ is found by taking the leading $r$ terms of the singular value decomposition of $D^{-1/2} A D^{-1/2}$, it follows that $\norm{F}{M^*}$ is exactly the $2$-norm of the leading $r$ singular values.
Thus
\begin{align*}
\norm{2}{\phi^{(r)}}^2 = \norm{F}{D^{1/2} \widehat{A}_r D^{1/2}}^2 \le \norm{F}{M^*}^2 = \norm{2}{\Lambda^{(r)}}^2. \tag*{\qedhere}
\end{align*}
\end{proof}

\section{Sparse Regularity Lemmas for Cut Pseudorandom Graphs}

We will prove various regularity lemmas for \emph{cut pseudorandom graphs} in this section.
Let us first introduce some notation.
The graphs $G = (V, E)$ discussed in the rest of the paper are undirected and have both edge weights and node weights, all of which are positive.
In principle we can allow self-loops if desired.
\begin{itemize}
\item We write $\vol(v)$ for the weight of a node $v$, or $\vol(V)$ for the summed weight of the node set $V$.

\item We write $w(e)$ for the weight of an edge $e$, or $w(E)$ for the sum of $w(e)$ over all $e \in E$, or $w(V_1, V_2)$ for the sum of $w(e)$ over all $e \in E \cap (V_1 \times V_2)$.

\item We write $w^*(u, v) := w(u, v) \vol(u) \vol(v)$, and similarly $w^*(E)$ is the sum of $w^*(e)$ over all edges $e \in E$, and $w^*(V_1, V_2)$ is the sum of $w^*(e)$ over all edges $e \in E \cap (V_1 \times V_2)$.

\item The \emph{edge-weighted adjacency matrix $A$} has entries $A_{ij} = w(v_i, v_j)$ for each edge $(v_i, v_j)$, or $A_{ij} = 0$ if $(v_i, v_j)$ is a non-edge.
The \emph{fully-weighted adjacency matrix $A^*$} instead has entries $A^*_{ij} = w^*(v_i, v_j)$, or $A^*_{ij} = 0$ if $(v_i, v_j)$ is a non-edge.
\end{itemize}
 
\begin{definition} [Cut Pseudorandomness]
We say that a graph $G$ is \emph{$r$ cut pseudorandom} with respect to a diagonal positive definite matrix $D$ if, when we take the cut decomposition of its fully-weighted adjacency matrix $A^*$ with respect to $D$, the projection values $\{\phi_i\}$ satisfy
$$\norm{2}{\phi^{(r)}} = O\left( \frac{w^*(E)}{\onevec^T D \onevec} \right).$$
We will sometimes just say ``cut pseudorandom,'' making $D$ implicit, when we do not need to reference it.
\end{definition}

We notice that cut pseudorandomness implies $r' < r$ cut pseudorandomness, so it is typically strongest to state our results with $r$ as small as possible.

%In this section, given $S, T, V_i, V_j \subseteq [n]$, we write $s, t, v_i, v_j$ to mean the cut vector that has ones in the indices in the corresponding set and zeroes elsewhere.
%In an abuse of notation, we will also intersect these vectors, e.g., $s \cap v_i$ refers to the cut vector corresponding to the set $S \cap V_i \subseteq [n]$.

\subsection{Sparse Cut Approximation and Sparse Weak Regularity}

Here we prove two closely-related results on approximating the cut norm of a graph:

\begin{lemma} [Sparse Cut Approximation] \label{lem:sparsecut}
Let $\eps > 0$ and let $G = (V, E)$ be $r = \Theta(\eps^{-2})$ cut pseudorandom.
Then there is a graph $H = (V, E_H)$, possibly with different node and edge weights from $G$, such that:
\begin{itemize}
\item The edge-weighted adjacency matrix of $H$ is the sum of $O(\eps^{-2})$ cut matrices, and
\item $\norm{\square}{A^*_G- A^*_H} = O(\eps w^*(E))$, where $A^*_G, A^*_H$ are the fully-weighted adjacency matrices of $G, H$ respectively.
\end{itemize}
\end{lemma}
\begin{proof}
Compute a cut decomposition of the fully-weighted adjacency matrix $A^*_G$ of $G$ with respect to the matrix $D$ implying cut pseudorandomness.
Let $\{\phi_i\}$ be the projection values.
The smallest entry $\phi_i$ in $\phi^{(r)}$ satisfies
$$\phi_i \le \frac{\norm{2}{\phi^{(r)}}}{\sqrt{r}} = O\left(\eps \norm{2}{\phi^{(r)}}\right) = O\left( \eps \cdot \frac{w^*(E)}{\onevec^T D \onevec}\right),$$
where the last equality follows by cut pseudorandomness of $G$.
By Lemma \ref{lem:cutbound}, we have
$$\phi_i \ge \frac{\norm{\square}{A^*_G - D \widehat{A}^*_{i-1} D}}{\onevec^T D \onevec},$$
where $\widehat{A}^*_{i-1}$ is from the decomposition of $A^*_G$.
Combining these inequalities, we have
\begin{align*}
\norm{\square}{A^*_G - D\widehat{A}^*_{i-1} D} = O(\eps w^*(E)).
\end{align*}
Then, we recall by construction that $\widehat{A}^*_{i-1}$ is a linear combination of $i-1 \le r$ cut matrices.
We interpret this as the edge-weighted adjacency matrix of the graph $H$, and we interpret the diagonal entries of the matrix $D$ as the node weights of $H$.
Thus $D \widehat{A}^*_{i-1} D = A^*_H$, and the lemma follows.
\end{proof}

\begin{theorem} [Sparse Weak Regularity Lemma] \label{thm:weakpowreg}
Let $\eps > 0$ and let $G = (V, E)$ be $r = \Theta(\eps^{-2})$ cut pseudorandom.
Then there is a graph $H = (V, E_H)$, possibly with different node and edge weights from $G$, such that:
\begin{itemize}
\item There is a vertex partition $\Pi$ into $|\Pi| = 2^{O(\eps^{-2})}$ parts such that for all $V_i, V_j \in \Pi$ the edge-weighted adjacency matrix of $H$ is constant on its $V_i \times V_j$ block, and
\item $\norm{\square}{A^*_G- A^*_H} = O(\eps w^*(E))$, where $A^*_G, A^*_H$ are the fully-weighted adjacency matrices of $G, H$ respectively.
\end{itemize}
\end{theorem}
\begin{proof}
This follows almost immediately from Lemma \ref{lem:sparsecut}.
We simply recall that the edge-weighted adjacency matrix of $H$ in Lemma \ref{lem:sparsecut} is the sum of $\le r$ cut matrices, and we notice that each cut matrix is the outer product of two cut vectors, which each define a bipartition of the vertex set.
Thus, the common refinement $\Pi$ of all these bipartitions has $|\Pi| \le 2^{2r} = 2^{O(r)}$ parts, and the edge-weighted adjacency matrix of $H$ is constant on its $V_i \times V_j$ block for all $V_i, V_j \in \Pi$.
\end{proof}

We remark here that it is common to merge together the parts in $\Pi$ into supernodes, setting the weight of each supernode to be the sum of the node weights that comprise it.
In this sense one can view $H$ as a graph on very few nodes that approximates $G$ in cut norm.
We will stay away from this phrasing, however, as it requires one to formalize the embedding of the nodes of $H$ into the nodes of $G$ which can get cumbersome.

\subsection{Sparse \Szemeredi{} Regularity}

%
%\begin{definition} [\Szemeredi{} Irregularity (implicit in \cite{Szemeredi75}, explicit in \cite{LS07, FL14})]
%For a graph $G = (V, E)$ with adjacency matrix $A$ and node weight matrix $D$, given a vertex partition $\Pi$ of $V$, we define
%
%$$\szemirr(\Pi) := \min \limits_{B} \norm{\Pi}{A - DBD},$$
%where again $B$ ranges over the matrices that are constant on their $V_1 \times V_2$ block for each $V_1, V_2 \in \Pi$.
%\end{definition}
%
%The following basic property of $\szemirr$ will be useful, which has an easy algebraic proof and appears in many prior arguments in the area:
%
%\begin{lemma}\label{lem:szemprops}
%If $\Pi'$ refines $\Pi$, then $\szemirr(\Pi') \le \szemirr(\Pi)$.
%\end{lemma}

We now prove our sparse \Szemeredi{} regularity lemma.
Our approach is a thematic expansion on ideas from \cite{Szegedy10, LS07}, which give proofs of regularity lemmas based on the SVD.
We show that our projection values can be effectively substituted for singular values in a way that makes the approach work for sparse regularity as well.
This particular exposition is influenced by \cite{Taoblog}.
\begin{theorem} [Sparse \Szemeredi{} Regularity Lemma] \label{thm:szempowreg}
Let $\eps > 0$ and let $G = (V, E)$ be $r = \tower(\Theta(\eps^{-2}))$ cut pseudorandom.
Then there is a graph $H = (V, E_H)$, possibly with different node and edge weights from $G$, and a vertex partition $\Pi$ into $|\Pi| = \tower(O(\eps^{-2}))$ parts such that:
\begin{itemize}
\item For all $V_i, V_j \in \Pi$ the edge-weighted adjacency matrix of $H$ is constant on its $V_i \times V_j$ block, and
\item $\norm{\Pi}{A^*_G- A^*_H} = O(\eps w^*(E))$, where $A^*_G, A^*_H$ are the fully-weighted adjacency matrices of $G, H$ respectively.
\end{itemize}
\end{theorem}
\begin{proof}

Let $G$ be $r$ cut pseudorandom with respect to the matrix $D$.
We will treat the parameter $r$ as an integer that can be selected later in the proof, verifying at the end that we only need to use a bound of the form $r = \tower(\Omega(\eps^{-2}))$.
Let $q$ be an integer and let $f : \mathbb{N} \to \mathbb{N}$ be a function satisfying $f(n) > n$ for all $n$, both of which we will also choose later in the proof.
Compute the cut decomposition of $A^*_G$ with respect to $D$.
By Lemma \ref{lem:sparsecut}, so long as $r \ge f(q)$, there exists an index $j \le f(q)$ such that
\begin{align}
\norm{\square}{A^*_G - D\widehat{A}^*_j D} = O\left(\frac{w^*(E)}{f(q)^{1/2}}\right), \label{eq:cutnormbound}
\end{align}
where $\widehat{A}^*_j$ is from the cut decomposition.
Additionally, let $i := \min\{q, j\}$.
As before, $\widehat{A}^*_i$ (also from the cut decomposition) is a linear combination of $i$ cut matrices, each of which is formed using two cut vectors, which each define two bipartitions of the vertices.
Let $\Pi$ be the common refinement of these bipartitions, which thus has $|\Pi| \le 2^{O(i)}$ parts, and we note that $\widehat{A}^*_i$ is constant on each $V_1 \times V_2$ block for $V_1, V_2 \in \Pi$.
We define $H$ to be the graph whose node weights are given by $D$ and whose edge-weighted adjacency matrix is given by $\widehat{A}^*_i$; thus $H$ satisfies the first point in the theorem.
Our goal is now to bound $\norm{\Pi}{A^*_G - A^*_H}$.
Using the triangle inequality we can write
$$\norm{\Pi}{A^*_G - A^*_H} = \norm{\Pi}{A^*_G - D\widehat{A}^*_i D} \le \norm{\Pi}{A^*_G - D\widehat{A}^*_j D} + \norm{\Pi}{D\widehat{A}^*_j D - D \widehat{A}^*_i D}.$$
So it suffices to bound each of the two norms on the right-hand side as $O(\eps w^*(E))$.

%\begin{align*}
%\szemirr(\Pi) &\le \sum \limits_{V_1, V_2 \in \Pi} \max \limits_{S \subseteq V_1, T \subseteq V_2} \left| A(S, T) - \widehat{A}_i(S, T) \right|\\
%&\le \sum \limits_{V_1, V_2 \in \Pi} \max \limits_{S \subseteq V_1, T \subseteq V_2} \left|\left(A - \widehat{A}_j \right)(S, T) \right| \tag*{(first term)}\\
%&+ \sum \limits_{V_1, V_2 \in \Pi} \max \limits_{S \subseteq V_1, T \subseteq V_2} \left| \left(\widehat{A}_j - \widehat{A}_i \right)(S, T) \right| \tag*{(second term).}
%\end{align*}
%
%We will now bound these two summations separately.

\paragraph{The First Norm.}

To bound $\norm{\Pi}{A^*_G - D \widehat{A}^*_j D}$, we leverage our freedom to choose $f$ and (partially) our freedom to choose $q$.
We have:
\begin{align*}
\norm{\Pi}{A^*_G - D \widehat{A}^*_j D} &= \sum \limits_{V_1, V_2 \in \Pi} \max \limits_{S \subseteq V_1, T \subseteq V_2} \left| \left(A^*_G - D\widehat{A}^*_j D\right)(S, T) \right|\\
&\le 2^{O(q)} \norm{\square}{A^* - D\widehat{A}^*_j D}\\
&\le \frac{2^{O(q)}}{f(q)^{1/2}} w^*(E)
\end{align*}
where the second-to-last inequality is by unioning over the $\le 2^{O(q)}$ pairs of parts in the partition, and the last inequality is from (\ref{eq:cutnormbound}). 
We now choose $f$ to be an exponential function of the form $f(q) = C^{q}$ with a large enough base $C$, and we enforce a lower bound of $q \ge \log (1/\eps)$, giving
\begin{align*}
\norm{\Pi}{A^*_G - D \widehat{A}^*_j D} &\le \frac{2^{O(q)}}{f(q)^{1/2}} w^*(E)\\
&\le w^*(E) \cdot 2^{-O(q)} \\
&\le O(\eps w^*(E)) \tag*{(first norm control).}
\end{align*}

\paragraph{The Second Norm.} We next bound $\norm{\Pi}{D\widehat{A}^*_j D - D \widehat{A}^*_i D}$.
In the case where $j \le q$, we have $i = j$ and so $\widehat{A}^*_i = \widehat{A}^*_j$ and so this term is identically zero.
So we will assume in the following that $i = q < j \le f(q)$.
The idea here is to consider the projection values $\{\phi_i\}$ from the decomposition, and choose the value of $q$ to be sure that not too much of the total mass of the projection values lies between $\phi_{i+1}$ and $\phi_j$, and then we can apply Lemma \ref{lem:partbound} to convert this to a bound on the partition norm.
Let $f^{(i)}$ denote the function $f$ iterated $i$ times, and suppose
$r \ge f^{\Omega(\eps^{-2})}(0)$.
Letting $\{\phi_i\}$ be the projection values of the decomposition, and denote by $\phi(x :: y)$ the vector holding the subsequence of projection values $(\phi_x, \dots, \phi_y)$.
By the pigeonhole principle there is an integer $\log(1/\eps) \le k = O(\eps^{-2})$ satisfying\footnote{The lower bound $k \ge \log(1/\eps)$ is only used to enforce the lower bound $q \ge \log(1/\eps)$ picked up in the analysis of the first norm; it does not play a role in bounding the second norm.}
\begin{align}
\norm{2}{\phi\left(f^{(k)}(0) +1:: f^{(k+1)}(0)\right)}^2 = O\left( \eps^2 \norm{2}{\phi^{(r)}}^2\right). \label{eq:midfrob}
\end{align}
We set $q := f^{(k)}(0)$ according to this index $k$ satisfying (\ref{eq:midfrob}).
Now we bound:
\begin{align*}
\norm{\Pi}{D \widehat{A}^*_j D - D \widehat{A}^*_i D} &\le \left(\onevec^T D \onevec\right) \norm{2}{\phi(i::j)} \tag*{Lemma \ref{lem:partbound}}\\
&\le \left(\onevec^T D \onevec\right) \cdot \norm{2}{\phi\left( i :: f^{(i)} \right)} \tag*{$j \le f(i)$}\\
&= \left(\onevec^T D \onevec\right) \cdot O\left(\eps \norm{2}{\phi^{(r)}}\right) \tag*{using (\ref{eq:midfrob})}\\
&= O(\eps w^*(E)) \tag*{(second norm control)}
\end{align*}
where the last step follows from the assumption that $G$ is cut pseudorandom.

\paragraph{Parameter Checks.}

Our control on the first term requires us to choose $f$ to be an exponential function.
In the second norm control, we set $q = f^{O(\eps^{-2})}(0)$, and thus $q = \tower(O(\eps^{-2}))$.
The partition size is $|\Pi| = 2^{O(q)} = \tower(O(\eps^{-2}))$ as well.
We pick up a lower bound of $r \ge f(q)$ in the setup, and $r \ge f^{\Omega(\eps^{-2})}(0)$ in the second norm control; thus both are satisfied so long as $r = \tower(\Omega(\eps^{-2}))$.
\end{proof}

\section{Cut Pseudorandomness of Some Graph Classes}

In the following we will consider some graph classes studied in prior work on sparse regularity and show that they are cut pseudorandom.

\subsection{$L^p$ Upper Regularity \label{sec:lp}}

%lem 3.4 in Lp graphon paper gives general p

In this part, we will discuss a slightly modified version of \emph{$L^p$ upper regularity}:
\begin{definition} [$L^p$ Upper Regularity \cite{BCCZ14, BCCZ14b}] \label{def:upperreg}
A graph $G = (V, E)$ is \emph{$L^p$ upper regular} with parameters $(C, \eta)$ if, for any vertex partition $\Pi$ of size $|\Pi| \le \eta^{-1}$, we have
$$\left( \sum \limits_{V_1, V_2 \in \Pi} \frac{\vol(V_1) \vol(V_2)}{\vol(V)^2} \cdot \left(\frac{w^*(V_1, V_2)}{\vol(V_1)\vol(V_2)}\right)^p \right)^{1/p} \le C \frac{w^*(E)}{\vol(V)^2} .$$
We also allow $p = \infty$, in which case we amend the above sum to a max in the usual way, giving the condition
$$\max \limits_{\substack{V_i, V_j \subseteq V,\\ \vol(V_i), \vol(V_j) \ge \eta \vol(V)}} \frac{w^*(V_1, V_2)}{\vol(V_1)\vol(V_2)} \le C\frac{w^*(E)}{\vol(V)^2} .$$
\end{definition}

The original definition considers partitions where $\vol(V_i) \ge \eta \vol(V)$ for all parts $V_i \in \Pi$.
Our change to possibly unbalanced partitions is in line with the take on regularity throughout this paper inspired by \cite{FL14, LS07}, which de-emphasizes the need for balance between different parts in the partition.
It is very likely possible to enforce balance by adding some massaging steps to the decomposition itself; for examples of this in prior work, see \cite{ACHKRS10} (c.f.\ Algorithm 4.3, Step 6.2, and its proof of correctness), or \cite{BCCZ14} (c.f.\ Lemmas 3.1, 3.2, 3.3).
However, we have not included similar massaging here.

This change is definition is rather restrictive for $p=\infty$: for example, supposing $G$ has unit node and edge weights, the left-hand side is maximized at $1$ by choosing $V_1, V_2$ to be singleton sets containing nodes connected by an edge, and thus it forces $|E| \ge Cn^2$, hence requiring the graph to be dense in the classical sense.
At the other extreme, the change does not matter at all when $p=1$: all graphs are $L^1$ upper regular in either definition.
As we will see shortly, the changed definition only needs to be applied at the case $p=2$, which we believe is more towards the unrestrictive end.

\begin{theorem} \label{thm:lpdense}
For any $\eta > 0$, $p \ge 2$, every $G$ that is $L^p$ upper regular with parameters $(\eta, O(1))$ is also $r = \Theta(\log 1/\eta)$ cut pseudorandom.
\end{theorem}
\begin{proof}
First, as shown in \cite{BCCZ14, BCCZ14b}, H{\"o}lder's inequality implies that $L^p$ upper regularity implies $L^{p' < p}$ upper regularity, so it suffices to prove the claim for $p=2$ only.
%Intuitively, one can verify this by viewing the left-hand side of the inequality as a certain weighted average of terms arising from the partition; when $p=1$ the weight of the term corresponding to $V_1, V_2 \in \Pi$ is given by $\vol(V_1) \vol(V_2) / \vol(V)^2$, when $p=\infty$ all the weight in the average is placed on the largest term, and between these points larger $p$ means that the larger terms are emphasized in the average more.
To do so, let $A^*, D$ be the fully-weighted adjacency matrix and node weight matrix of $G$, respectively.
Take the cut decomposition of $A^*$ with respect to $D$, and let $\Pi$ be the common refinement of all the $\le 2r$ bipartitions used in the leading $r$ cut matrices of the decomposition, which have the matrix $\widehat{A}^*_r$ from the decomposition in their span.
Thus we have $|\Pi| < 1/\eta$, assuming $r = O(\log 1/\eta)$ with a small enough implicit constant.

By construction, the matrix $\widehat{A}^*_r$ is the projection of the matrix $D^{-1}A^* D^{-1}$, by $\ip{D}{\cdot}{\cdot}$, onto the span of the first $r$ selected cut matrices.
Let $\widehat{B}^*_r$ be the projection of $D^{-1}A^* D^{-1}$, by $\ip{D}{\cdot}{\cdot}$, onto the span of all cut matrices corresponding to parts in $\Pi$ (that is, all cut matrices of the form $v_1 v_2^T$, where $v_1, v_2$ are the indicator vectors of some $V_1, V_2 \in \Pi$).
Notice that the span of these cut matrices from $\Pi$ contains the span of the leading $r$ cut matrices in the decomposition, and thus we have
$$\norm{D}{\widehat{A}^*_r} \le \norm{D}{\widehat{B}^*_r}.$$
For all $V_1, V_2 \in \Pi$, we have that $\widehat{B}^*_r$ is constant on its $V_1 \times V_2$ block, and each entry on the block is given by
\begin{align}
\ip{D}{\frac{v_1 v_2^T}{\norm{D}{v_1 v_2^T}^2}}{D^{-1}A^* D^{-1}} = \frac{v_2^T A^* v_1}{\left(v_1^T D v_1\right)\left(v_2^T D v_2\right)} = O\left(\frac{w^*(V_1, V_2)}{\vol(V_1) \vol(V_2)}\right). \label{eq:qurentries}
\end{align}

Using this, we bound
\begin{align*}
\norm{2}{\phi^{(r)}}^2 = \sum \limits_{i=1}^r \norm{D}{A^*_i}^2 = \norm{D}{\sum \limits_{i=1}^r A^*_i}^2 &= \norm{D}{\widehat{A}^*_r}^2 \tag*{$\{A^*_i\}$ orthogonal}\\
&\le \norm{D}{\widehat{B}^*_r}^2\\
&= \norm{F}{D^{1/2} \widehat{B}^*_r D^{1/2}}^2\\
&= \sum \limits_{V_1, V_2 \in \Pi} \sum \limits_{a \in V_1, b \in V_2} \left(\widehat{B}^*_r\right)^2_{ab} D_{aa} D_{bb} \\
&= \sum \limits_{V_1, V_2 \in \Pi} \vol(V_1) \vol(V_2) \cdot O\left(\frac{w^*(V_1, V_2)}{\vol(V_1) \vol(V_2)}\right)^2 \tag*{using (\ref{eq:qurentries})}\\
&= \vol(V)^2 \cdot  \sum \limits_{V_1, V_2 \in \Pi} \frac{\vol(V_1) \vol(V_2)}{\vol(V)^2} \cdot O\left(\frac{w^*(V_1, V_2)}{\vol(V_1)\vol(V_2)}\right)^2\\
&= \vol(V)^2 \cdot O\left(\frac{ w^*(E) }{\vol(V)^2} \right)^2 \tag*{$G$ is $L^2$ dense, $|\Pi| \le \eta^{-1}$}\\
&= O\left( \frac{w^*(E)}{\vol(V)} \right)^2\\
&= O\left( \frac{w^*(E)}{\onevec^T D \onevec} \right)^2. \tag*{\qedhere}
\end{align*}
\end{proof}

\subsection{Core Density and Low Threshold Rank \label{sec:ltr}}

In this part, let $\deg(v)$ denote the \emph{weighted} degree of the node $v$; that is, $\deg(v) := \sum_{v \in e} w(e)$.

\begin{definition} [Core Density \cite{KV09}] \label{def:coredense}
For a graph $G = (V, E)$ with unit node weights, the \emph{core density} is the quantity
$$\sum \limits_{(u, v) \in E} \frac{1}{\left(deg(u) + \overline{d}\right)\left(deg(v) + \overline{d}\right)}.$$
We say that $G$ is \emph{core dense} if its core density is $O(1)$.
\end{definition}

\begin{theorem} \label{thm:coredensepr}
Let $G$ be a core dense graph with average weighted node degree $\overline{d}$.
Then $G$ is $r$ cut pseudorandom, for any parameter $r$.
\end{theorem}
\begin{proof}
Let $A$ be the (edge- or fully-) weighted adjacency matrix of $G$, and let $D$ be the diagonal matrix where $D_{ii} := \deg(v_i) + \overline{d}$.
The entry $(u, v)$ of the matrix $D^{-1/2} A D^{-1/2}$ is exactly
$$\frac{1}{\sqrt{\left(deg(u) + \overline{d}\right)\left(deg(v) + \overline{d}\right)}},$$
so the sum square of entries of this matrix is exactly the core density of $A$.
Thus, since $G$ is core dense, we have
$$\norm{D^{-1}}{A}^2 = \norm{F}{D^{-1/2} A D^{-1/2}}^2 = O(1).$$
Using this together with Lemma \ref{lem:dinvbound}, letting $\Phi = \phi^{(n^2)}$ be the vector holding \emph{all} projection values of the decomposition of $G$ with respect to $D$, we have
\begin{align*}
\norm{2}{\Phi}^2 \le \norm{F}{D^{-1/2} A D^{-1/2}}^2 = O(1).
\end{align*}
Additionally, in $G$ we have
$$\onevec^T D \onevec = \sum \limits_{v \in V} \deg(v) + \overline{d} = 2w(E) + 2w(E) = O(w(E)),$$
and so
\begin{align*}
\frac{w(E)}{\onevec^T D \onevec} = O(1),
\end{align*}
thus implying cut pseudorandomness.
\end{proof}

Finally, we consider graphs of \emph{low threshold rank}:

\begin{definition} [Low Threshold Rank Graphs \cite{BRS11, GS11, GS12, GS13, GharanT15}]
Let $G$ be a graph with unit node weights and (edge- or fully-)weighted adjacency matrix $A$, and let $D$ be the diagonal matrix where $D_{ii} = \deg(v_i)$.
We say that $G$ has \emph{low $\eps$ threshold rank} if
$$\sum \limits_{\lambda \text{ is an eigenvalue of } D^{-1/2}AD^{-1/2} , \lambda > \eps} \lambda^2 = O\left( 1 \right).$$
\end{definition}

\begin{theorem}
For any $\eps > 0$, any graph $G$ with low $\eps$ threshold rank is $r = \Theta(\eps^{-2})$ cut pseudorandom.
\end{theorem}
\begin{proof}
Let $\Lambda^{(r)}$ be the vector containing the leading $r$ eigenvalues of the matrix $D^{-1/2} A D^{-1/2}$.
Using Lemma \ref{lem:dinvbound}, when we decompose $A$ with respect to $D$, the projection values $\{\phi_i\}$ satisfy
\begin{align*}
\norm{2}{\phi^{(r)}}^2 &\le \norm{2}{\Lambda^{(r)}}^2\\
&= \left(\sum \limits_{\lambda \in \Lambda^{(r)}, \lambda > \eps} \lambda^2 \right) + \left(\sum \limits_{\lambda \in \Lambda^{(r)}, \lambda \le \eps } \lambda^2 \right)\\
&\le O\left(1\right) + r \eps^2\\
&= O(1)
\end{align*}
where the second-to-last inequality is obtained using the definition of cut pseudorandomness and a union over the $r$ eigenvalues in $\Lambda^{(r)}$, and the last one is obtained using the parameter restriction $r = O(\eps^{-2})$.
Finally, as before we notice that $\onevec^T D \onevec = 2w(E)$, so $w(E) / (\onevec^T D \onevec) = O(1)$, thus implying cut psuedorandomness.
\end{proof}

\section{Algorithmic Applications \label{sec:algorithms}}

Here we will survey some of the algorithmic applications that arise from our new regularity lemmas.

\subsection{Graph Algorithms}

Oveis Gharan and Trevisan \cite{GharanT15} developed the following algorithmic applications of cut decompositions:
\begin{theorem} [\cite{GharanT15}] \label{thm:algs}
Let $\eps > 0$, let $G = (V, E)$ be a node- and edge-\textbf{un}weighted graph, and let $H$ be a graph on the same vertex set $V$ which may have node and edge weights.
Suppose $H$ satisfies the premises of Lemma \ref{lem:sparsecut} -- that is,
\begin{itemize}
\item The edge-weighted adjacency matrix of $H$ is the sum of $O(\eps^{-2})$ cut matrices (given as input), and for each cut matrix $C$ we have $\left|w^*(C)\right| = O(w^*(E))$, and
\item $\norm{\square}{A_G- A^*_H} = O(\eps |E|)$, where $A_G$ is the adjacency matrix of $G$ and $A^*_H$ is the fully-weighted adjacency matrix of $H$.
\end{itemize}
Then in $2^{\widetilde{O}\left(1/\eps^3\right)} + \poly(n)$ time, one can find any of:
\begin{itemize}
\item A MAX-CUT on $G$ within $\pm \eps |E|$ error,
\item A MAX-BISECTION on $G$ within $\pm \eps |E|$ error, and
\item A MIN-BISECTION on $G$ within $\pm \eps |E|$ error.  
\end{itemize}
\end{theorem}
Note that MAX-CUT, MAX-BISECTION, and MIN-BISECTION all have $\text{OPT} = \Theta(|E|)$, so this algorithm is an efficient FPTAS.
Theorem \ref{thm:algs} is highly nontrivial; the proof includes a number of detailed and clever technical ideas that we will not recap here.
Their proof is for a specific node-weight function for $H$ (specifically, each node $v$ is weighted by $\deg(v)^{-1/2}$), but it extends easily to arbitrary positive node weights.

In \cite{GharanT15}, the authors observe that graphs of low threshold rank satisfy the premises, and so they enjoy these approximation algorithms.
By Lemma \ref{lem:sparsecut}, the premises of Theorem \ref{thm:algs} also hold for the appropriate class of cut pseudorandom graphs.
We thus have: 
\begin{corollary}
For any $\eps > 0$ and any unweighted graph that is $r = O(\eps^{-2})$ cut pseudorandom (with respect to any matrix $D$), one can compute MAX-CUT, MAX-BISECTION, and MIN-BISECTION with an efficient FPTAS.
\end{corollary}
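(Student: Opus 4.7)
The plan is to chain together the pieces already assembled earlier in the paper. First I would invoke Corollary \ref{cor:cutdecomp}: for a graph that is $r = O(\eps^{-2})$ cut pseudorandom under some diagonal inner product, there exists $\widehat{A}$ that is a sum of $O(\eps^{-2})$ cut matrices satisfying $\cutnorm{A - \widehat{A}} = O(\eps \cutnorm{A})$. Since the graph is unweighted, $\cutnorm{A} = 1^T A 1 = 2|E|$, so this bound is exactly of the form $O(\eps |E|)$ required to match the hypothesis of Theorem \ref{thm:algs}.

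The next step is to make this existential statement algorithmic, so that it fits the ``in $\poly(n)$ time, one can compute $\widehat{A}$'' premise of Theorem \ref{thm:algs}. For this I would appeal to the claim (proved via the LP-rounding argument of \cite{Charikar00}, extended to diagonal inner products in Appendix \ref{app:cutpvdalg}) that the cut PVD under any diagonal inner product is computable in polynomial time. Truncating the computed cut PVD at $r = \lceil \eps^{-2} \rceil$ terms yields the required $\widehat{A}$ in $\poly(n)$ time, since all the inner products used to witness cut pseudorandomness for the graph classes in this paper (upper $L_p$ regular, low threshold rank, core dense) are explicitly diagonal and themselves constructible in polynomial time from $A$.

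With the premise of Theorem \ref{thm:algs} verified, its conclusion immediately yields a $2^{\widetilde{O}(1/\eps^3)} + \poly(n)$ time algorithm producing MAX-CUT, MAX-BISECTION, or MIN-BISECTION within $\pm \eps |E|$ additive error; since each of these problems has $\text{OPT} = \Theta(|E|)$, this is precisely an efficient FPTAS in the sense claimed.

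The main potential obstacle is the algorithmic subtlety in the second step: one needs to know not only that the cut PVD is computable under the diagonal inner product witnessing cut pseudorandomness, but also that this inner product itself is efficiently identifiable for the concrete graph classes of interest. For upper $L_p$ regularity the standard Euclidean inner product suffices, and for threshold rank and core density the relevant diagonal matrix is (a simple function of) the degree matrix $D$, which is trivially computable from $A$. So in each case the construction can be made explicit, and the FPTAS follows uniformly across all graph classes captured by cut pseudorandomness.
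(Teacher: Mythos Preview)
Your proposal is correct and follows essentially the same route as the paper: invoke Corollary~\ref{cor:cutdecomp} to produce the cut decomposition, observe that $\cutnorm{A} = \Theta(|E|)$ for unweighted graphs, and feed this into Theorem~\ref{thm:algs}. You are, if anything, more careful than the paper about the algorithmic premise of Theorem~\ref{thm:algs}, explicitly noting that the cut PVD is computable in $\poly(n)$ time under the diagonal inner products that arise and that those inner products are themselves efficiently extractable from $A$; the paper treats this as implicit from earlier discussion.
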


(The hypothesis that $\left|w^*(C)\right| = O(w^*(E))$ was not stated explicitly in the proof of Lemma \ref{lem:sparsecut}, but it follows instantly from the proof.)

%Notice that Theorem \ref{thm:algs} does not \emph{directly} recover algorithms for core-dense or low threshold rank graphs, even though it was originally developed for the latter.
%An additional technical step is needed to recover Theorem \ref{thm:algs} for core-dense and low threshold rank graphs.
%Following the method in \cite{GharanT15, KV09}, given a graph $G$ with adjacency matrix $A$ that is either low threshold rank or core-dense, before applying the regularity method, we first decompose it as $A = S+C$, where $C$ is a matrix of small cut norm that we will call the \emph{core}, and $S$ is a matrix of small Frobenius norm that we call the \emph{shell}.
%Specifically, let $D$ be the diagonal matrix of node degrees (in the context of low threshold rank) or the diagonal matrix as in Theorem \ref{thm:coredensepr} (in the context of core density).
%Compute the SVD of the matrix $D^{-1/2}A D^{-1/2}$ as
%$$D^{-1/2}AD^{-1/2} := \lambda_1 v_1 w_1^T + \dots + \lambda_n v_n w_n^T.$$
%Let $C'$ be the sum of terms in this SVD for which $\lambda_i \le \eps$, and let $S'$ be the sum of terms for which $\lambda_i > \eps$.
%We then define $C := D^{1/2} C' D^{1/2}$ and $S := D^{1/2} S' D^{1/2}$.

\subsection{MAX CSP Algorithms and Tensor Decompositions}

MAX-CUT can be viewed as MAX-$2$-CSP; this naturally leads to the question of whether the regularity method also gives PTASes for MAX-$k$-CSP.
Such a PTAS was proved by Kannan and Vempala in \cite{KV09}, for core-dense graphs.
The core of their approach was the following tensor decomposition theorem.
Recall that a tensor $T \in \rr^{n_1 \times n_2 \times \ldots n_s}$ is an $s$-dimensional array.
While there is no direct analog of SVD for tensors, we observe next that the cut decomposition extends naturally.
For a tensor $T$ as above, define its $s$-form 
\[
T(x^1, \ldots, x^s) = \sum_{i_1,i_2,\ldots, i_s = 1}^{n_1, \ldots, n_s}T_{i_1,\ldots,i_r}x^1_{i_1}\ldots x^r_{i_s},
\]
and define the cut norm of $T$ as 
\[
\norm{\square}{T} :=  \max \limits_{(u^1,\ldots,u^s) \in X} \left| T(u^1,\ldots,u^s) \right|.
\]
\begin{theorem}\label{thm:tensor-pdecomp}
Let $T \in \rr^{n_1 \times \ldots \times n_s}$ and let $\eps > 0$.
Then there is $\widehat{T}$ that is a linear combination of $r = O(\eps^{-2})$ ``cut tensors,'' i.e., each is formed by an outer product of a tuple of cut vectors, satisfying
$$\norm{\square}{T - \widehat{T}} = O\left( \eps \norm{F}{T} \sqrt{n_1 \times \dots \times n_s} \right).$$
\end{theorem}

Here $\norm{F}{T}$ denotes the $L^2$ vector norm of the entries of $T$.
We will point out that our cut decomposition approach can be extended to prove this tensor decomposition theorem \emph{existentially}.
This is only half the work needed for a PTAS: one also must compute the decomposition quickly, and unfortunately, our algorithms do not readily extend to tensors.
However, Kannan and Vempala \cite{KV09} gave a different algorithm that computed the decomposition in the case of core-dense tensors (we refer to \cite{KV09} for details on how the definition of core-dense matrices extends to tensors).

Define the inner product $\ip{F}{\cdot}{\cdot}$ over tensors that works as the standard Euclidean inner product over their entries.
Let $S$ be an initially-empty set of tensors, and at all times we define the tensor $\widehat{T}$ as the projection of the input tensor $T$ onto the span of the tensors in $S$, under $\ip{F}{\cdot}{\cdot}$.
In each round, we find the cut vectors maximizing the quantity
$$\arg \max \limits_{(u^1,\ldots,u^s) \in X} \left| \frac{(T - \widehat{T})(u^1,\ldots,u^s)}{\|u^1\|_2, \dots, \|u^s\|_2} \right|.$$
Then, we add the tensor $u^1 \otimes \dots \otimes u^s$ to $S$ (where $\otimes$ denotes outer product).
Letting $\widehat{T}_i$ be the value of $\widehat{T}$ in the $i^{th}$ round of the decomposition, and $T_i := \widehat{T}_i - \widehat{T}_{i-1}$, we define the $i^{th}$ projection value as $\phi_i := \norm{F}{T_i}$.
Similar to the case of matrices, the tensors $\{T_i\}$ are orthogonal under $\ip{F}{\cdot}{\cdot}$, and it follows that $\phi_i \le \eps \norm{F}{T - \widehat{T}_i}$ for some $i = O(\eps^{-2})$.
This matrix $\widehat{T}_i$ satisfies the theorem, and the proof is completed by the chain of inequalities
$$\frac{\cutnorm{T - \widehat{T}_i}}{\sqrt{n_1 \times \dots \times n_s}} \le \max \limits_{(u^1,\ldots,u^s) \in X} \left| \frac{(T - \widehat{T}_i)(u^1,\ldots,u^s)}{\|u^1\|_2, \dots, \|u^s\|_2} \right| \le \phi_i \le \eps \norm{F}{T - \widehat{T}_i}.$$

%We next consider algorithms for MAX-$k$-CSP, which generalize MAX-CUT (in the sense that MAX-CUT = MAX-$2$-CSP).
%In \cite{KV09}, the authors prove:
%\begin{theorem} [\cite{KV09}] \label{thm:maxcsptas}
%Let $\eps > 0$ and consider an instance of MAX-$k$-CSP represented by a tensor $T$.
%If the tensor $T$ is \emph{core-dense} (which is a natural higher-dimensional analog of Definition \ref{def:coredense}; see \cite{KV09} for details), then there is a PTAS for MAX-$k$-CSP on this instance.
%\end{theorem}
%
%The core of this proof is the following decomposition theorem:
%\begin{theorem} [\cite{KV09}] \label{thm:tensorsvd}
%For any $\eps > 0$ and tensor $T$, there is a tensor $\widehat{T}$ that is a linear combination of $r = O(\eps^{-2})$ rank one tensors satisfying
%$$\left\| T - \widehat{T} \right\|_{2} \le \eps \frobnorm{T}.$$
%\end{theorem}
%We note that our tensor PVD approximation of Theorem \ref{thm:tensor-pdecomp} immediately implies Theorem \ref{thm:tensorsvd}, with a significantly shorter proof.
%Thus it gives Theorem \ref{thm:maxcsptas} as a corollary, combined with some additional algorithmic ideas from \cite{KV09}.
%It is very likely that the FPTAS is available for a broader class of ``cut pseudorandom'' tensors, using the techniques shown in this paper, but we have not formally set up cut pseudorandomness for tensors so we will leave this as a direction for future work.

%\input{cyclecount}

\section*{Acknowledgments}

We are grateful to Mina Dalirrooyfard, Keaton Hamm, and to several anonymous reviewers for corrections, useful technical discussions, and references to prior work.

\bibliographystyle{acm}
\bibliography{regrefs}

\appendix

\section{Cut Norm Maximization \label{app:cutpvdalg}}

Here, we prove that our cut decompositions are computable in polynomial time with respect to any diagonal matrix $D$ with positive polynomially-bounded integer entries on the diagonal.
Recall that the goal is to take an arbitrary matrix\footnote{In our applications, we apply this algorithm to matrices of the form $A - D \widehat{A}_i D$, but by reparametrizing $A \gets A - D \widehat{A}_i D$ it suffices to focus on arbitrary matrices $A$.} $A$ and a diagonal matrix $D$, and compute

\begin{align*}
\max \limits_{v, w \text{ cut vectors}} \left|\ip{D}{\frac{vw^T}{\norm{D}{vw^T}}}{D^{-1} A D^{-1}}\right| &= \max \limits_{v, w \text{ cut vectors}} \left|\frac{w^T A v}{\norm{2}{w^T D w} \norm{2}{v^T D v}}\right|\\
&= \max \limits_{S, T \subseteq [n]} \left|\frac{A(S, T)}{\sqrt{\sum \limits_{i \in S} d_i \sum \limits_{j \in T} d_j}}\right|
\end{align*}
where $d_i := D_{ii}$.
We may drop the absolute value symbols in the maximized expression by running our algorithm twice, once on $A$ and once on $-A$, and taking the better of the two solutions.
So we will now show the following:

\begin{theorem} \label{thm:algorithmiccutpvd}
There is an algorithm that, given $A \in \rr^{n \times n}$ 
%with integer entries in the range $[-\poly(n), \poly(n)]$ 
and nonnegative vector of integers $d\in {\cal Z}^{n}$, returns $S^*, T^* \subseteq [n]$ exactly solving 
\[
\max_{S, T \subseteq [n]} \frac{A(S, T)}{\sqrt{\sum \limits_{i\in S} d_i \sum \limits_{j\in T} d_j}}
\]
in time polynomial in $n$, the description length of $A$ (number of bits) and $\sum_{i \in [n]} d_i$.
\end{theorem}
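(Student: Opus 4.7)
The plan is to reduce the diagonal-inner-product problem to the standard Euclidean case handled by Charikar's LP via a blow-up construction, and then argue that the returned optimum can be rearranged into block form. Set $D = \sum_i d_i$, and form $\tilde{A} \in \rr^{D \times D}$ by replacing row $i$ of $A$ with $d_i$ identical copies, column $j$ with $d_j$ identical copies, and rescaling so that $\tilde{A}_{i'j'} = A_{ij}/(d_i d_j)$ whenever $i', j'$ are copies of $i, j$. For any $S, T \subseteq [n]$ and the corresponding block-unions $S', T' \subseteq [D]$ (taking every copy of each element), one checks directly that $\tilde{A}(S', T') = A(S, T)$ and $|S'| = \sum_{i \in S} d_i$. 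So the original weighted objective at $(S, T)$ equals the standard cut objective $\tilde{A}(S', T') / \sqrt{|S'||T'|}$ at $(S', T')$, and running Charikar's LP on $\tilde{A}$ returns an exactly optimal pair $(\hat{S}, \hat{T}) \subseteq [D]^2$ in time polynomial in $D$.

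The remaining task is that $\hat{S}$ or $\hat{T}$ could cut through a block of duplicates rather than taking it in full, in which case there is no obvious $(S^*, T^*) \subseteq [n]^2$ to report. The key lemma is that an optimum can always be chosen in \emph{block form}, with each original index all-in or all-out on both sides. To prove it, fix all other choices and let $k_i := |\hat{S} \cap \{\text{copies of } i\}|$; the objective as a function of $k_i$ takes the form $f(k) = (N + k s) / \sqrt{(a+k) b}$ for constants $N, s, a, b$ determined by the fixed data. A short calculation of $f'$ (or equivalently the derivative of $f^2$) shows that $f$ is either monotone on $[0, d_i]$ or has a unique interior minimum when $s > 0$, so in all cases the maximum on $[0, d_i]$ is attained at an endpoint $\{0, d_i\}$. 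Iteratively rounding each $k_i$ on both sides to the better endpoint never decreases the objective, and the procedure terminates at a block-structured pair, which reads off as $(S^*, T^*) \subseteq [n]^2$ achieving the original optimum.

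Putting these pieces together gives the algorithm: construct $\tilde{A}$, apply Charikar's LP to get $(\hat S, \hat T)$, round to block form by the lemma's procedure, and output the induced $(S^*, T^*)$. The bit-length of $\tilde{A}$ is polynomial in $D$ and the bit-length of $A$, so the total running time is polynomial in $n$, the description length of $A$, and $\sum_i d_i$ as required.

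The main obstacle is the block-structure lemma. The derivative computation is elementary but requires a small sign-case analysis — $N$ and $s$ can each be of either sign when $A$ has negative entries, and one must verify the interior critical point of $f$ always falls outside $[0, d_i]$ as a maximum (it is either a minimum or a point where $f$ vanishes). The iterative rounding step is also worth executing carefully: each rounding move is justified by the one-dimensional claim applied to the current (possibly not globally optimal) configuration, and because each move is non-decreasing the final configuration retains the initial optimal value.
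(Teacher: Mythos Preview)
Your approach is genuinely different from the paper's: rather than reducing to the unweighted problem by a blow-up, the paper keeps the weights and extends Charikar's LP directly, enumerating the finitely many possible values of the ratio $c=\sum_{i\in S}d_i/\sum_{j\in T}d_j$ and solving one weighted LP per value (with the usual level-set rounding). The enumeration is where the polynomial dependence on $\sum_i d_i$ enters.

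Your block-structure lemma has a real gap. For $f(k)=(N+ks)/\sqrt{(a+k)b}$, the numerator of $f'(k)$ is $sa+sk/2-N/2$. When $s<0$ this is \emph{decreasing} in $k$, so the unique critical point is a \emph{maximum}, not a minimum. Concretely, if $s<0$ and $s(2a+d_i)<N<2as$, then $k^{*}=N/s-2a\in(0,d_i)$ is an interior maximizer of $f$, and one computes $f(k^{*})=2s\sqrt{(a+k^{*})/b}<0$. So the statement ``in all cases the maximum on $[0,d_i]$ is attained at an endpoint'' is false, and the iterative rounding is not non-decreasing. A small instance showing the blown-up optimum is \emph{strictly} better than every block-structured choice: take $n=2$, $d=(2,1)$, $A_{11}=A_{22}=-100$, $A_{12}=A_{21}=-1$. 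The original weighted optimum is $A_{12}/\sqrt{2}\approx -0.707$, while in the blown-up matrix the pair consisting of a \emph{single} copy of index $1$ against index $2$ achieves $A_{12}/2=-0.5$, and rounding that copy up to the full block strictly decreases the objective.

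A possible repair is to observe that the bad case forces the objective to be negative, so it cannot occur when the blown-up optimum is nonnegative; one might then argue separately that the negative-optimum case can be reduced to the nonnegative one (e.g., by also running on $-A$ and handling the two answers). But as written, the theorem concerns $\max A(S,T)/\sqrt{\cdots}$ for arbitrary $A$, your proof does not cover the negative-optimum case, and without the non-decreasing property there is no argument linking the rounded output to the original optimum.
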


Our algorithm is closely based on that of Charikar \cite{Charikar00}, which solves the special case where $A$ is a binary matrix.
Charikar's algorithm extends almost immediately to handle the presence of the node weights $\{d_i\}$, and to handle the presence of arbitrary nonnegative entries of the input matrix $A$, but we introduce some nontrivial additional machinery (essentially, the last constraint in the following LP) in order to handle the possibility of negative entries in the input matrix.

The first step in the algorithm is to guess the values
$$c_S := \sum_{i\in S}d_i \quad \text{and} \quad c_T := \sum_{j\in T} d_j,$$
noting that the number of possible values of $(c_S, c_T)$ is at most
$$\left(\sum_{i \in [n]} d_i\right)^2,$$
and so we pay this factor in runtime to iterate over all possible guesses for $c_S, c_T$.
We define
$$OPT[c_S, c_T] := \left(\max_{\substack{S, T \subseteq [n]\\\sum_{i \in S} d_i = c_S\\\sum_{j \in T} d_j = c_T}} \frac{A(S, T)}{\sqrt{\sum \limits_{i\in S} d_i \sum \limits_{j\in T} d_j}}\right)$$
as the maximum value of the expression in Theorem \ref{thm:algorithmiccutpvd}, over subsets $S, T$ that agree with the guesses $c_S, c_T$.
The second step in the algorithm is to set
$$A' := A + \lambda dd^T,$$
where $d \in \rr^n$ is the diagonal of $D$, and $\lambda \ge 0$ is a scalar chosen sufficiently large so that all entries of $A'$ are positive.
Next, consider the following linear program (which depends on our guesses of $c_S, c_T$):
\begin{mdframed}
\begin{align*}
\text{choose } & \{s_i\}_{i \in [n]}, \{t_j\}_{j \in [n]}, \{x_{ij}\}_{i,j \in [n]}\\
\text{to maximize } & \sum_{i,j\in [n]} A'_{ij} x_{ij} \\
\text{such that:} &\\
& 0 \le x_{ij} \le s_i \text{ and } 0 \le x_{ij} \le t_j && \text{for all } i,j\\
& 0 \le s_i \le \frac{1}{\sqrt{c_S c_T}} \text{ and } 0 \le t_j \le \frac{1}{\sqrt{c_S c_T}} && \text{for all } i,j\\
& \sum_{i \in [n]} d_i s_i = \sqrt{\frac{c_S}{c_T}} \\
& \sum_{j\in [n]} d_j t_j = \sqrt{\frac{c_T}{c_S}}\\
& \sum_{i, j \in [n]} d_i d_j x_{ij} = \sqrt{c_S c_T}.
\end{align*}
\end{mdframed}
We will write $OPT'[c_S, c_T]$ for the maximizing value of this linear program.
We first show that this LP has a solution whose quality depends on the maximizing value:
\begin{claim} \label{clm:optub}
For each $c_s, c_T$, we have
$$OPT'[c_S, c_T] \ge OPT[c_S, c_T] + \lambda \sqrt{c_S c_T}.$$
\end{claim}
\begin{proof}
For any two subsets $S, T$ that agree with our guesses for $c_S, c_T$, consider the variable setting
$$s_i := \begin{cases}
\frac{1}{\sqrt{c_S c_T}} & \text{if } i \in S\\
0 & \text{if } i \notin S
\end{cases}$$
$$t_j := \begin{cases}
\frac{1}{\sqrt{c_S c_T}} & \text{if } j \in T\\
0 & \text{if } j \notin T
\end{cases}$$
$$x_{ij} := \begin{cases}
\frac{1}{\sqrt{c_S c_T}} & \text{if } i \in S \text{ and } j \in T\\
0 & \text{otherwise.}
\end{cases}$$
Let us show that this variable setting is feasibile.
The inequalities in the first two lines follow immediately from the variable choices.
For the equality in the third line, we have 
\begin{align*}
\sum \limits_{i \in [n]} d_i s_i &= \frac{1}{\sqrt{c_S c_T}} \cdot \sum \limits_{i \in S} d_i\\
&= \frac{1}{\sqrt{c_S c_T}} \cdot c_S\\
&= \sqrt{\frac{c_S}{c_T}},
\end{align*}
and a similar calculation holds to verify the fourth line equality on $\sum_{j \in T} d_j t_j$.
For the fifth line, we have
\begin{align*}
\sum \limits_{i,j \in [n]} d_i d_j x_{ij} &= \sqrt{c_S c_T} \cdot \left( \sum \limits_{i \in S} d_i s_i \right) \left( \sum \limits_{j \in T} d_j t_j\right)\\
&= \sqrt{c_S c_T} \cdot \sqrt{\frac{c_S}{c_T}} \cdot \sqrt{\frac{c_T}{c_S}} \tag*{by previous calculations}\\
&= \sqrt{c_S c_T}.
\end{align*}
Now we compute the value of the variable setting.
For pairs of indices $i,j$ with $i \notin S$ or $j \notin T$, we set $x_{ij}=0$ and so this pair of indices does not affect the objective.
For pairs of indices $i,j$ with $i \in S$ and $j \notin T$, we set
$$x_{ij} = \frac{1}{\sqrt{c_S c_T}},$$
and so the total contribution to the objective over all such indices is
\begin{align*}
\sum \limits_{i \in S, j \in T} \frac{A'_{ij}}{\sqrt{c_S c_T}} &= \frac{A'(S, T)}{\sqrt{c_S c_T}}\\
&= \frac{A(S, T) + \sum \limits_{i \in S, j \in T} \lambda d_i d_j}{\sqrt{c_S c_T}}\\
&= \frac{A(S, T) + \lambda\left(\sum \limits_{i \in S} d_i\right)\left(\sum \limits_{j \in T} d_j\right)}{\sqrt{c_S c_T}}\\
&= \frac{A(S, T) + \lambda c_S c_T}{\sqrt{c_S c_T}}\\
&= \frac{A(S, T)}{\sqrt{c_S c_T}} + \lambda \sqrt{c_S c_T},
\end{align*}
as desired.
\end{proof}

Next we want to show that, given an arbitrary maximizing variable setting for this LP, one can extract explicit subsets $(S^*, T^*)$ realizing this value.
We have:
\begin{claim} \label{clm:optlb}
Given an optimizing solution to the LP $\{x_{ij}\}, \{s_i\}, \{t_j\}$, in polynomial time we can find subsets $S^*, T^*$ with
$$\frac{A(S^*, T^*)}{\sqrt{\sum \limits_{i \in S^*} d_i \sum \limits_{j \in T^*} d_j}} + \lambda \sqrt{c_S c_T} \ge OPT'[c_S, c_T].$$
(These subsets $S^*, T^*$ might not agree with our choice of $c_S, c_T$.)
\end{claim}
\begin{proof}
For any parameter $r \ge 0$, define
$$S(r) := \{i \in [n] \ \mid \ s_i \ge r\}, \quad T(r) := \{j \in [n] \ \mid \ t_j \ge r\}, \quad E(r) := \{(i, j) \ \mid \ x_{ij} \ge r\}.$$
Note that since each $A'_{ij}$ is positive, in an optimal variable setting we will set each $x_{ij} = \min\{s_i, t_j\}$ and therefore $E(r) = S(r) \times T(r)$.
It will be useful to observe a control on the following three integral expressions:
$$\int \limits_{0}^{\infty} \left( \sum \limits_{i \in S(r)} d_i\right) \ dr = \sum \limits_{i \in [n]} d_i s_i = \sqrt{\frac{c_S}{c_T}},$$
$$\int \limits_{0}^{\infty} \left( \sum \limits_{j \in T(r)} d_j\right) \ dr = \sum \limits_{j \in [n]} d_j t_j = \sqrt{\frac{c_T}{c_S}},$$
$$\int \limits_{0}^{\infty} \left( \sum \limits_{ij \in E(r)} A_{ij} \right) \ dr = \sum \limits_{i, j \in [n]} A_{ij} x_{ij}.$$
%Therefore, we can bound SNIP?
%\begin{align*}
%\int \limits_{0}^{\infty} \sqrt{\sum \limits_{i \in S(r)} d_i \sum \limits_{j \in T(r)} d_j} \ dr &\le \left[\left( \int \limits_{0}^{\infty} \left(\sum \limits_{i \in S(r)} d_i\right) \ dr \right)\left( \int \limits_{0}^{\infty} \left(\sum \limits_{j \in T(r)} d_j\right) \ dr \right)\right]^{1/2} \tag*{Cauchy-Schwarz}\\
%&= \left[ \sqrt{\frac{c_S}{c_T}} \cdot \sqrt{\frac{c_T}{c_S}}\right]^{1/2}\\
%&= 1.
%\end{align*}
%Also, notice that
Thus,
\begin{align*}
OPT'[c_S, c_T] &= \sum \limits_{i, j \in [n]} A'_{ij} x_{ij}\\
&= \left(\sum \limits_{i, j \in [n]} A_{ij} x_{ij} \right) + \left(\lambda \sum \limits_{i, j \in [n]} d_i d_j x_{ij}\right)\\
&= \left(\sum \limits_{i, j \in [n]} A_{ij} x_{ij} \right) + \lambda \sqrt{c_S c_T} \tag*{last constraint in LP}\\
&= \int \limits_{0}^{\infty} \left( \sum \limits_{ij \in E(r)} A_{ij}\right) \ dr + \lambda \sqrt{c_S c_T}\\
&= \frac{\int \limits_{0}^{\infty} \left( \sum \limits_{ij \in E(r)} A_{ij}\right) \ dr}{\left(\sqrt{\frac{c_S}{c_T}} \cdot \sqrt{\frac{c_T}{c_S}}\right)^{1/2}} + \lambda \sqrt{c_S c_T}\\
&= \frac{\int \limits_{0}^{\infty} \left( \sum \limits_{ij \in E(r)} A_{ij}\right) \ dr}{\left(\int \limits_{0}^{\infty} \left( \sum \limits_{i \in S(r)} d_i\right) \ dr \cdot \int \limits_{0}^{\infty} \left( \sum \limits_{j \in T(r)} d_j\right) \ dr\right)^{1/2}} + \lambda \sqrt{c_S c_T}\\
&\le \frac{\int \limits_{0}^{\infty} \left( \sum \limits_{ij \in E(r)} A_{ij}\right) \ dr}{\int \limits_{0}^{\infty} \sqrt{\sum \limits_{i \in S(r)} d_i \sum \limits_{j \in T(r)} d_j}\ dr} + \lambda \sqrt{c_S c_T} \tag*{Cauchy-Schwarz inequality.}\\
\end{align*}

Thus, there must exist a particular choice of $r$ for which
\begin{align*}
OPT'[c_S, c_T] &\le \frac{\sum \limits_{ij \in E(r)} A_{ij}}{\sqrt{\sum \limits_{i \in S(r)} d_i \sum \limits_{j \in T(r)} d_j}} + \lambda \sqrt{c_S c_T}\\
&= \frac{A(S(r), T(r))}{\sqrt{\sum \limits_{i \in S(r)} d_i \sum \limits_{j \in T(r)} d_j}} + \lambda \sqrt{c_S c_T},
\end{align*}
showing that the sets $S(r), T(r)$ satisfy the lemma.
We can find the appropriate choice of $r$ in polynomial time by scanning, since we only need to check values of $r$ at which nodes leave $S(r), T(r)$, and there are only $O(n)$ many such thresholds.
\end{proof}

We now wrap up the proof.
In every round of our algorithm, by the previous two claims, we compute subsets $S^*, T^*$ satisfying the inequality
$$\frac{A(S^*, T^*)}{\sqrt{\sum \limits_{i \in S^*} d_i \sum \limits_{j \in T^*} d_j}} + \lambda \sqrt{c_S c_T} \ge OPT'[c_S, c_T] \ge OPT[c_S, c_T] + \lambda \sqrt{c_S c_T}$$
and so
$$\frac{A(S^*, T^*)}{\sqrt{\sum \limits_{i \in S^*} d_i \sum \limits_{j \in T^*} d_j}} \ge OPT[c_S, c_T].$$
Thus, in at least one round of the algorithm, our subsets $S^*, T^*$ satisfy
$$\frac{A(S^*, T^*)}{\sqrt{\sum \limits_{i \in S^*} d_i \sum \limits_{j \in T^*} d_j}} \ge \max \limits_{c_S, c_T} OPT[c_S, c_T].$$
By definition the right-hand side of this inequality is precisely the maximum of the left-hand side over all possible choices of $S^*, T^*$.
Thus, this particular choice of subsets $S^*, T^*$ maximizes the expression in Theorem \ref{thm:algorithmiccutpvd}, as desired.

\end{document}